\newtheorem{observation}[theorem]{Observation}
\begin{document}

\title{Descriptive complexity of graph spectra}
\titlerunning{Descriptive complexity of graph spectra}  
%
\author{Anuj Dawar\inst{1} Simone Severini\inst{2}
Octavio Zapata\inst{2}}
\authorrunning{Anuj Dawar et al.} 
%
\tocauthor{Octavio Zapata}
\institute{University of Cambridge Computer Laboratory, UK\\
\and
Department of Computer Science, University College London, UK\thanks{We thank Aida Abiad, Chris Godsil, Robin Hirsch and David Roberson for fruitful discussions. This work was supported by CONACyT, EPSRC and The Royal Society.}}

\maketitle              

\begin{abstract}
Two graphs are co-spectral if their respective adjacency matrices
have the same multi-set of eigenvalues.  A graph is said to be
determined by its spectrum if all graphs that are co-spectral with
it are isomorphic to it.  We consider these properties in relation
to logical definability.  We show that any pair of graphs that are
elementarily equivalent with respect to the three-variable counting
first-order logic $C^3$ are co-spectral, and this is not the case
with $C^2$, nor with any number of variables if we exclude counting
quantifiers.  We also show that the class of graphs that are
determined by their spectra is definable in partial fixed-point
logic with counting.  We relate these properties to other algebraic and 
combinatorial problems.

\keywords{descriptive complexity, algebraic graph theory, isomorphism approximations}
\end{abstract}

\renewcommand{\sp}{\mathrm{sp}}
\newcommand{\C}{\mathcal{C}}
\newcommand{\DS}{\text{DS}}
\newcommand{\PFPC}{\textsc{pfpc}}
\newcommand{\LFPC}{\textsc{fpc}}
\newcommand{\PSpace}{\textsc{PSpace}}

\section{Introduction}\label{sec:intro}
The spectrum of a graph $G$ is the multi-set of eigenvalues of its adjacency matrix.  Even though it is defined in terms of the adjacency matrix of $G$, the spectrum does not, in fact, depend on the order in which the vertices of $G$ are listed.  In other words, isomorphic graphs have the same spectrum.  The converse is false: two graphs may have the same spectrum without being isomorphic.  Say that two graphs are co-spectral if they have the same spectrum.   Our aim in this paper is to study the relationship of this equivalence relation on graphs in relation to a number of other approximations of isomorphism coming from logic, combinatorics and algebra.  We also investigate the definability of co-spectrality and related notions in logic.

Specifically, we show that for any graph $G$, we can construct a formula $\phi_G$ of first-order logic with counting, using only three variables (i.e.\ the logic $C^3$) so that $H \models \phi_G$ only if $H$ is co-spectral with $G$.  From this, it follows that elementary equivalence in $C^3$ refines co-spectrality, a result that also follows from~\cite{Alzaga10}.  In contrast, we show that co-spectrality is incomparable with elementary equivalence in $C^2$, or with elementary equivalence in $L^k$ (first-order logic with $k$ variables but without counting quantifiers) for any $k$.  We show that on strongly regular graphs, co-spectrality exactly co-incides with $C^3$-equivalence.  

For definability results, we show that co-spectrality of a pair of
graphs is definable in $\LFPC$, inflationary fixed-point logic with
counting.  We also consider the property of a graph $G$ to be
\emph{determined by its spectrum}, meaning that all graphs co-spectral
with $G$ are isomorphic with $G$.  We establish that this property is
definable in \emph{partial fixed-point logic with counting} ($\PFPC$).

In section \ref{pre}, we construct some basic first-order formulas
that we use to prove various results later, and we also review some
well-known facts in the study of graph spectra. In section \ref{wlk},
we make explicit the connection between the spectrum of a graph and
the total number of closed walks on it. Then we discuss aspects of the
class of graphs that are uniquely determined by their spectra, and
establish that co-spectrality on the class of all graphs is refined by $C^3$-equivalence. Also, we show a lower bound for the distinguishability of graph spectra in the finite-variable logic. In section \ref{sym}, we give an overview of a combinatorial algorithm (named after Weisfeiler and Leman) for distinguishing between non-isomorphic graphs, and study the relationship with other algorithms of algebraic and combinatorial nature. Finally, in section \ref{pfp}, we establish some results about the logical definability of co-spectrality and of the property of being a graph determined by its spectrum.

\section{Preliminaries}\label{pre}

Consider a first-order language $L = \{E\}$, where $E$ is a binary relation symbol interpreted as an irreflexive symmetric binary relation called \emph{adjacency}. Then an $L$-structure $G=(V_G,E_G)$ is called a \emph{simple undirected graph}. The domain $V_G$ of $G$ is called the \emph{vertex set} and its elements are called \emph{vertices}. The unordered pairs of vertices in the interpretation $E_G$ of $E$ are called \emph{edges}. Formally, a \emph{graph} is an element of the elementary class axiomatised by the first-order $L$-sentence: $\forall x\forall y(\lnot E(x,x)\land(E(x,y)\rightarrow E(y,x))). $

The \emph{adjacency matrix} of an $n$-vertex graph $G$ with vertices
$v_1,\ldots,v_n$ is the $n\times n$ matrix $A_G$ with $(A_G)_{ij}=1$
if vertex $v_i$ is adjacent to vertex $v_j$, and $(A_G)_{ij}=0$
otherwise. By definition, every adjacency matrix is real and symmetric
with diagonal elements all equal to zero. A \emph{permutation matrix}
$P$ is a binary matrix with a unique $1$ in each row and column.
Permutation matrices are orthogonal matrices so the inverse $P^{-1}$
of $P$ is equal to its transpose $P^{T}$. Two graphs $G$ and $H$ are
\emph{isomorphic} if there is a bijection $h$ from $V_G$ to $V_H$ that
preserves adjacency. The existence of such a map is denoted by $G\cong H$. From this definition it is not difficult to see that two graphs $G$ and $H$ are isomorphic if, and only if, there exists a permutation matrix $P$ such that $A_GP =PA_H. $

The \emph{characteristic polynomial} of an $n$-vertex graph $G$ is a polynomial in a single variable $\lambda$ defined as $p_G(\lambda):=det(\lambda I - A_G),$ 
where $det(\cdot)$ is the operation of computing the determinant of the matrix inside the parentheses, and $I$ is the identity matrix of the same order as $A_G$. The \emph{spectrum} of $G$ is the multi-set $\sp(G):=\{\lambda:p_G(\lambda)=0\}$, where each root of $p_G(\lambda)$ is considered according to its multiplicity. If $\theta\in\sp(G)$ then $\theta I - A_G$ is not invertible, and so there exists a nonzero vector $u$ such that $A_Gu=\theta u$. A vector like $u$ is called an \emph{eigenvector} of $G$ corresponding to $\theta$. The elements in $\sp(G)$ are called the \emph{eigenvalues} of $G$. Two graphs are called \emph{co-spectral} if they have the same spectrum.

The \emph{trace} of a matrix is  the sum of all its diagonal
elements. By the definition of matrix multiplication, for any two matrices $A,B$ we have $
tr(AB)=tr(BA), $ where $tr(\cdot)$ is the operation of computing the trace of the matrix inside the parentheses. Therefore, if $G$ and $H$ are two isomorphic graphs then $
tr(A_H)=tr(P^{T}A_GP)=tr(A_GPP^{T})=tr(A_G) $ and so, $tr(A^k_G)=tr(A^k_H)$ for any $k\geq 0$. 

By the spectral decomposition theorem, computing the trace of the $k$-th powers of a real symmetric matrix $A$ will give the sum of the $k$-th powers of the eigenvalues of $A$. Assuming that $A$ is an $n\times n$ matrix with (possibly repeated) eigenvalues $\lambda_1,\dots,\lambda_n$, the \emph{elementary symmetric polynomials} $e_k$ in the eigenvalues are the sum of all distinct products of $k$ distinct eigenvalues: 
\[
\begin{array}{l}
e_0(\lambda_1,\dots,\lambda_n):=1;~~~~~~~ e_1(\lambda_1,\dots,\lambda_n):=\sum_{i=1}^n \lambda_i;\\
e_k(\lambda_1,\dots,\lambda_n):=\sum_{1\leq i_1<\cdots<i_k\leq
  n}\lambda_{i_1}\cdots\lambda_{i_k}\ \ \text{ for $1\leq k\leq n$}.
\end{array}
\]
This expressions are the coefficients of the characteristic polynomial of $A$ modulo a $1$ or $-1$ factor. That is, 
\begin{align*}
det(\lambda I-A)&=\prod_{i=1}^{n}(\lambda-\lambda_i)\\
&=\lambda^n-e_1(\lambda_1,\dots,\lambda_n)\lambda^{n-1}+\dots+(-1)^{n}e_n(\lambda_1,\dots,\lambda_n)\\
&=\sum_{k=0}^{n}(-1)^{n+k}e_{n-k}(\lambda_1,\dots,\lambda_n)\lambda^k.
\end{align*}

So if we know $s_k(\lambda_1,\dots,\lambda_n):=\sum_{i=1}^{n}\lambda_i^k$ for $k=1,\dots,n$, then using \emph{Newton's identities}: $$ e_k(\lambda_1,\dots,\lambda_n)=\frac{1}{k}\sum_{j=1}^{k}(-1)^{j-1}e_{k-j}(\lambda_1,\dots,\lambda_n)s_k(\lambda_1,\dots,\lambda_n)\ \ \text{for $1\leq k\leq n$}, $$ we can obtain all the symmetric polynomials in the eigenvalues, and so we can reconstruct the characteristic polynomial of $A$.\\ 

\begin{proposition}~\label{prop:pre1}
For $n$-vertex graphs $G$ and $H$, the following are equivalent:
\begin{itemize}
\item[$(i)$] $G$ and $H$ are co-spectral;
\item[$(ii)$] $G$ and $H$ have the same characteristic polynomial;
\item[$(iii)$] $tr(A_G^k) = tr(A_H^k)$ for $1\leq k \leq n$.
\end{itemize}
\end{proposition}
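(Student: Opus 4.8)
The plan is to treat $(i)$ and $(ii)$ as essentially interchangeable and then connect them to $(iii)$ using exactly the machinery assembled in the preliminaries above (the spectral decomposition theorem and Newton's identities). For $(i)\Leftrightarrow(ii)$: the characteristic polynomial $p_G(\lambda)=det(\lambda I-A_G)$ is monic of degree $n$, and $\sp(G)$ is by definition its multi-set of roots. Since a monic polynomial factors as $\prod_{i=1}^n(\lambda-\lambda_i)$ over its roots listed with multiplicity, the polynomial $p_G$ and the multi-set $\sp(G)$ determine one another; hence $\sp(G)=\sp(H)$ if and only if $p_G=p_H$.

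For $(ii)\Rightarrow(iii)$: by the spectral decomposition theorem, for every $k\geq 0$ we have $tr(A_G^k)=s_k(\lambda_1,\dots,\lambda_n)$ where $\lambda_1,\dots,\lambda_n$ are the eigenvalues of $G$ listed with multiplicity, and likewise for $H$. If $p_G=p_H$ then $G$ and $H$ have the same multi-set of eigenvalues, so the power sums $s_k$ agree for every $k$, in particular for $1\leq k\leq n$, which is $(iii)$.

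The substantive direction is $(iii)\Rightarrow(ii)$, and this is where the discussion of Newton's identities does the work. Assume $tr(A_G^k)=tr(A_H^k)$ for $1\leq k\leq n$. By the spectral decomposition theorem again, this says that the power sums $s_1,\dots,s_n$ of the eigenvalues of $G$ coincide with those of $H$. I would then show by induction on $k$ that the elementary symmetric polynomials $e_k$ in the eigenvalues also agree for $0\leq k\leq n$: we have $e_0=1$ on both sides, and Newton's identities express $e_k$ as a fixed polynomial in $e_0,\dots,e_{k-1}$ and $s_1,\dots,s_k$, so if $e_0,\dots,e_{k-1}$ agree and $s_1,\dots,s_k$ agree then $e_k$ agrees. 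Since the coefficients of the characteristic polynomial are, up to the sign $(-1)^{n+j}$, exactly the $e_{n-j}$, and since $p_G$ and $p_H$ are both monic of degree $n$, it follows that $p_G=p_H$, establishing $(ii)$.

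I do not anticipate a genuine obstacle: the only points that need care are that Newton's identities are being applied over a field of characteristic zero, so that the division by $k$ in each step is harmless, and that the recursion for $e_k$ truly uses only $s_1,\dots,s_k$ rather than higher power sums — which is precisely why knowing the traces $tr(A^k)$ for $k$ up to $n$, rather than for all $k$, already suffices to pin down the degree-$n$ characteristic polynomial.
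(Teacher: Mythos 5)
Your proof is correct and follows essentially the same route the paper intends: the proposition is stated without an explicit proof precisely because the preceding preliminaries (the spectral decomposition theorem identifying $tr(A^k)$ with the power sums $s_k$, and Newton's identities recovering the elementary symmetric polynomials, i.e.\ the coefficients of the characteristic polynomial, from $s_1,\dots,s_n$) are exactly the argument you have written out. Your added remarks on characteristic zero and on the recursion using only $s_1,\dots,s_k$ are accurate and fill in the details the paper leaves implicit.
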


\newcommand{\GF}{\mathrm{GF}}
\newcommand{\1}{\mathbf{1}}

\section{Spectra and Walks}~\label{wlk}
Given a graph $G$, a \emph{walk of length} $l$ in $G$ is a sequence $(v_0,v_1,\dots,v_{l})$ of vertices of $G$, such that consecutive vertices are adjacent in $G$. Formally, $(v_0,v_1,\dots,v_{l})$ is a walk of length $l$ in $G$ if, and only if, $\{v_{i-1},v_i\}\in E_G$ for $1\leq i\leq l$. We say that the walk $(v_0,v_1,\dots,v_{l})$ \emph{starts} at $v_0$ and \emph{ends} at $v_l$. A walk of length $l$ is said to be \emph{closed} (or $l$-\emph{closed}, for short) if it starts and ends in the same vertex.

Since the $ij$-th entry of $A_G^l$  is precisely the number of walks of length $l$ in $G$ starting at $v_i$ and ending at $v_j$, by Proposition~\ref{prop:pre1}, we have that the spectrum of $G$ is completely determined if we know the total number of closed walks for each length up to the number of vertices in $G$. Thus, two graphs $G$ and $H$ are co-spectral if, and only if, the total number of $l$-closed walks in $G$ is equal to the total number of $l$-closed walks in $H$ for all $l\geq 0$.

For an example of co-spectral non-isomorphic graphs, let $G=C_4\cup
K_1$ and $H=K_{1,4}$, where $C_n$ is the $n$-vertex cycle, $K_n$ is the complete $n$-vertex graph,
$K_{n,m}$ the complete $(n+m)$-vertex bipartite graph, and ``$\cup$''
denotes the disjoint union of two graphs. The spectrum of both $G$ and
$H$ is the multi-set $\{-2,0,0,0,2\}$. 
However, $G$ contains an isolated vertex while $H$ is a connected graph.

\subsection{Finite Variable Logics with Counting}
For each positive integer $k$, let $C^k$ denote the fragment of first-order logic in which only $k$ distinct variables can be used but we allow \emph{counting quantifiers}: so for each $i\geq 1$ we have a quantifier $\exists^i$ whose semantics is defined so that $\exists^i x \phi$ is true in a structure if there are at least $i$ distinct elements that can be substituted for $x$ to make $\phi$ true. We use the abbreviation $\exists^{=i} x \phi$ for the formula $\exists^i x \phi \land\lnot \exists^{i+1} x \phi$ that asserts the existence of exactly $i$ elements satisfying $\phi$. We write $G \equiv_C^k H$ to denote that the graphs $G$ and $H$ are not distinguished by any formula of $C^k$. Note that \emph{$C^k$-equivalence} is the usual first-order elementary equivalence relation restricted to formulas using at most $k$ distinct variables and possibly using counting quantifiers.

We show that for integers $k,l$, with $k\geq 0$ and $l\geq 1$, there is a formula $\psi^l_k(x,y)$ of $C^3$ so that for any graph $G$ and vertices $v,u\in V_G$, $G\models\psi^{l}_k[v,u]$ if, and only if, there are exactly $k$ walks of length $l$ in $G$ that start at $v$ and end at $u$.  We define this formula by induction on $l$.  Note that in the inductive definition, we refer to a formula $\psi^l_k(z,y)$.  This is to be read as the formula $\psi^l_k(x,y)$ with all occurrences of $x$ and $z$ (free or bound) interchanged. In particular, the free variables of $\psi^l_k(x,y)$ are exactly $x,y$ and those of $\psi^l_k(z,y)$ are exactly $z,y$.

For $l = 1$, the formulas are defined as follows: $$\psi^1_0(x,y):= \lnot E(x,y);\ \ \psi^1_1(x,y):= E(x,y); $$ $$ \mathrm{~~~and}\ \ \psi^1_k(x,y):=\mathrm{false\ \ \ \ for}\ \ k>1.$$ 

For the inductive case, we first introduce some notation. Say that a collection $(i_1,k_1),\dots,(i_r,k_r)$ of pairs of integers, with $i_j\geq 1$ and $k_j\geq 0$ is an \emph{indexed partition} of $k$ if the $k_1,\dots,k_r$ are pairwise distinct and $k =\sum_{j=1}^{r} i_j k_j$. That is, we partitioned $k$ into $\sum_{j=1}^r i_j$ distinct parts, and there are exactly $i_j$ parts of size $k_j$ where $j=1,\dots,r$. Let $K$ denote the set of all indexed partitions of $k$ and note that this is a finite set. 

Now, assume we have defined the formulas $\psi^l_k(x,y)$ for all values of $k\geq 0$. We proceed to define them for $l+1$ $$\psi^{l+1}_0(x,y):= \forall z(E(x,z)\rightarrow\psi^{l}_0(z,y))$$
$$\psi^{l+1}_k(x,y):= \bigvee_{(i_1,k_1),\dots,(i_r,k_r)\in K}\Big(\big(\bigwedge_{j=1}^r\exists^{=i_j}z\ (E(x,z) \land \psi^{l}_{k_j}(z,y)\big)\land\exists^{=d} z\ E(x,z)\Big),$$ where $d=\sum_{j=1}^r i_j$. Note that without allowing counting quantification it would be necessary to use many more distinct variables to rewrite the last formula.

Given an $n$-vertex graph $G$, as noted before $(A_G^{l})_{ij}$ is equal to the number of walks of length $l$ in $G$ from vertex $v_i$ to vertex $v_j$, so $(A_G^{l})_{ij}=k$ if, and only if, $G\models \psi^{l}_k (v_i,v_j)$. Once again, let $K$ denote the set of indexed partitions of $k$. For each integer $k\geq 0$ and $l\geq 0$, we define the sentence 
$$\phi^{l}_{k} := \bigvee_{(i_1,k_1),\dots,(i_r,k_r)\in K} \Big(\bigwedge_{j=1}^{r} \exists^{=i_j}x\exists y\big( x=y\land \psi^{l}_{k}(x,y)\big)\Big).$$
Then we have $G\models \phi^{l}_k$ if, and only if, the total number of closed walks of length $l$ in $G$ is exactly $k$. Hence $G\models \phi^{l}_k$ if, and only if, $tr(A_G^{l})=k$. Thus, we have the following proposition.
\begin{proposition}\label{prop:wlk1}
If $G\equiv_C^3 H$ then $G$ and $H$ are co-spectral.
\end{proposition}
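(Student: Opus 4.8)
The plan is to exploit the formulas $\phi^l_k$ that have just been constructed, which capture the trace of powers of the adjacency matrix within $C^3$. The key observation is that these sentences give us, for each length $l$ and each count $k$, a way to assert in $C^3$ that $\mathrm{tr}(A_G^l) = k$. Since for an $n$-vertex graph the trace $\mathrm{tr}(A_G^l)$ is bounded (it is at most $n^l$, or more crudely it is some fixed natural number determined by $G$), for each fixed $l$ exactly one of the sentences $\phi^l_k$ (ranging over the finitely many relevant values of $k$) is true in $G$.

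First I would observe that if $G \equiv_C^3 H$, then in particular $G$ and $H$ agree on every sentence of the form $\phi^l_k$. Fix any $l$ with $1 \le l \le n$ where $n = |V_G|$. Let $k = \mathrm{tr}(A_G^l)$, so that $G \models \phi^l_k$ by the discussion preceding the proposition. Since $G \equiv_C^3 H$, we get $H \models \phi^l_k$, and hence $\mathrm{tr}(A_H^l) = k = \mathrm{tr}(A_G^l)$. (One small point to address: we should note that $G$ and $H$ have the same number of vertices, which itself follows from $C^3$-equivalence — indeed already from $C^1$-equivalence with counting — since the sentence $\exists^{=n} x\, (x = x)$ distinguishes graphs of different sizes; this ensures the range $1 \le l \le n$ is the same for both and that Proposition~\ref{prop:pre1} applies with the same $n$.)

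Having established $\mathrm{tr}(A_G^l) = \mathrm{tr}(A_H^l)$ for all $l$ with $1 \le l \le n$, I would then simply invoke Proposition~\ref{prop:pre1}, specifically the equivalence of $(i)$ and $(iii)$, to conclude that $G$ and $H$ are co-spectral. That completes the argument.

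The proof is essentially a direct assembly of the pieces already in place, so there is no serious obstacle; the only thing requiring a moment's care is the bookkeeping around the correctness of the $\phi^l_k$ sentences — namely that $G \models \phi^l_k$ iff $\mathrm{tr}(A_G^l) = k$ — but this has already been argued in the text from the correctness of the $\psi^l_k(x,y)$ formulas (which in turn rests on the fact that indexed partitions of $k$ correctly enumerate the ways the out-neighbourhood walk-counts can sum to $k$). If one wanted to be fully self-contained one would verify the $\psi^l_k$ claim by induction on $l$, with the base case immediate and the inductive step matching $(A_G^{l+1})_{ij} = \sum_{z : E(v_i,z)} (A_G^l)_{zj}$ against the disjunction over indexed partitions; but given the preceding exposition this can be taken as established.
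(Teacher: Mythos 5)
Your proof is correct and takes essentially the same route as the paper's, which simply argues the contrapositive: a discrepancy in some trace $tr(A_G^{l})$ yields a distinguishing $C^3$-sentence $\phi^{l}_{k}$. Your extra remark that $C^3$-equivalence forces $|V_G|=|V_H|$ (so that Proposition~\ref{prop:pre1} applies with a common $n$) is a detail the paper's proof leaves implicit, and it is a welcome one.
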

\begin{proof}
Suppose $G$ and $H$ are two non-cospectral graphs. Then there is some $l$ such that $tr(A_G^{l})\neq tr(A_H^{l})$, i.e. the total number of closed walks of length $l$ in $G$ is different from the total number of closed walks of length $l$ in $H$ (see Proposition \ref{prop:pre1}). If $k$ is the total number of closed walks of length $l$ in $G$, then $G\models\phi^{l}_k$ and $H\not\models\phi^{l}_k$. Since $\phi^{l}_k$ is a sentence of $C^3$, we conclude that $G\not\equiv_C^3 H$. \qed
\end{proof}

For any graph $G$ and $l\geq 1$, there exists a positive integer $k_{l}$ such that $tr(A_G^{l})=k_{l}$. Since having the traces of powers of the adjacency matrix of $G$ up to the number of vertices is equivalent to having the spectrum of $G$, we can define a sentence $$\phi_G := \bigwedge_{l=1}^n \phi^{l}_{k_l}$$ of $C^3$ such that for any graph $H$, we have $H\models\phi_G$ if, and only if, $\sp(G)=\sp(H)$. 

\subsection{Graphs Determined by Their Spectra}

We say that a graph $G$ is \emph{determined by its spectrum} (for short, DS) when for any graph $H$, if $\sp(G)=\sp(H)$ then $G\cong H$. In words, a graph is determined by its spectrum when it is the only graph up to isomorphism with a certain spectrum. In Proposition \ref{prop:wlk1} we saw that $C^3$-equivalent graphs are necessarily co-spectral. That is, if two graphs $G$ and $H$ are $C^3$-equivalent then $G$ and $H$ must have the same spectrum. It thus follows that being identified by $C^3$ is weaker than being determined by the spectrum, so there are more graphs identified by $C^3$ than graphs determined by their spectra. 

\begin{observation}
On the class of all finite graphs, $C^3$-equivalence refines co-spectrality.
\end{observation}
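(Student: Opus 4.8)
The plan is to observe that this Observation is an immediate corollary of what has already been established, in particular Proposition~\ref{prop:wlk1} together with the construction of the $C^3$-sentence $\phi_G$. Recall that an equivalence relation $\equiv_1$ \emph{refines} an equivalence relation $\equiv_2$ exactly when each $\equiv_1$-class is contained in a single $\equiv_2$-class, i.e.\ $G \equiv_1 H$ implies $G \equiv_2 H$. Taking $\equiv_1$ to be $\equiv_C^3$ and $\equiv_2$ to be co-spectrality, the statement to be proved is precisely the implication ``$G \equiv_C^3 H \Rightarrow \sp(G)=\sp(H)$'', which is exactly what Proposition~\ref{prop:wlk1} asserts; so one option is simply to cite it.

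For a self-contained argument I would instead read the claim straight off $\phi_G$. Fix a finite graph $G$ on $n$ vertices and set $k_l := tr(A_G^l)$ for $1 \le l \le n$; then $\phi_G = \bigwedge_{l=1}^{n} \phi^l_{k_l}$ is a sentence of $C^3$ and $G \models \phi_G$ by construction of the $\phi^l_k$. If $H \equiv_C^3 G$, then since $\phi_G \in C^3$ we obtain $H \models \phi_G$, which by the defining property of the $\phi^l_k$ means $tr(A_H^l) = k_l = tr(A_G^l)$ for every $l \le n$; by Proposition~\ref{prop:pre1} this yields $\sp(H) = \sp(G)$. Hence every $\equiv_C^3$-class lies inside a single co-spectrality class, which is the assertion.

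I do not expect any real obstacle here: the substantive work — encoding ``there are exactly $k$ walks of length $l$ from $x$ to $y$'' by a formula $\psi^l_k(x,y)$ that recycles only three variables, and then counting closed walks by one further counting quantifier — has already been carried out above, and the Observation merely repackages it in the language of partition refinement. If desired, one could round things off by noting that the refinement is proper: the co-spectral pair $C_4 \cup K_1$ and $K_{1,4}$ from Section~\ref{wlk} has different degree multisets, so it is separated already by a sentence of $C^2$ (indeed by ``there exists an isolated vertex''), showing that co-spectrality does not in turn refine $\equiv_C^3$.
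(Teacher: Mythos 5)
Your proposal is correct and matches the paper exactly: the Observation is stated as an immediate restatement of Proposition~\ref{prop:wlk1}, whose proof in turn rests on the $C^3$-sentences $\phi^l_k$ (and the derived $\phi_G$) together with Proposition~\ref{prop:pre1}, which is precisely the chain of reasoning you give. Your closing remark that the refinement is proper (via $C_4\cup K_1$ versus $K_{1,4}$) is a correct bonus consistent with Proposition~\ref{prop:lb1}.
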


In general, determine whether a graph has the DS property (i.e., the equivalence class induced by having the same spectrum coincides with its isomorphism class) is an open problem in spectral graph theory (see, e.g. \cite{Van03}). Given a graph $G$ and a positive integer $k$, we say that the logic $C^k$ \emph{identifies} $G$ when for all graphs $H$, if $G\equiv_C^k H$ then $G\cong H$. Let $\C^k_n$ be the class of all $n$-vertex graphs that are identified by $C^k$. 
Since $C^2$-equivalence corresponds to indistinguishability by the $1$-dimensional Weisfeiler-Lehman algorithm~\cite{Immerman90}, from a classical result of Babai, Erd\H{o}s and Selkow~\cite{Babai80}, it follows that $\C^{2}_n$ contains almost all $n$-vertex graphs. Let $\DS_n$ be the class of all DS $n$-vertex graphs.

The $1$-dimensional Weisfeiler-Lehman algorithm (see Section
\ref{sym}) does not distinguish any pair of non-isomorphic regular graphs of the same degree with the same number of vertices. Hence, if a regular graph is not determined up to isomorphism
by its number of vertices and its degree, then it is not in $\C^2_n$.
However, there are regular graphs that are determined by their number of
vertices and their degree. For instance, the complete graph on $n$
vertices, which gives an example of a graph in  $\DS_n\cap \C^2_n$.

Let $T$ be a tree on $n$ vertices.  By a well-known result from Schwenk~\cite{Schwenk73}, with probability one there exists another tree $T^{\prime}$ such that $T$ and $T^{\prime}$ are co-spectral but not isomorphic. From a result of Immerman and Lander~\cite{Immerman90} we know that all trees are identified by $C^2$. Hence $T$ is an example of a graph in $\C^2_n$ and not in $\DS$. On the other hand, the disjoint union of two complete graphs with the same number of vertices is a graph which is determined by its spectrum. That is, $2K_m$ is DS (see~\cite[Section~6.1]{Van03}). For each $m>2$ it is possible to construct a connected regular graph $G_{2m}$ with the same number of vertices and the same degree as $2K_m$. Hence $G_{2m}$ and $2K_m$ are not distinguishable in $C^2$ and clearly not isomorphic. This shows that co-spectrality and elementary equivalence with respect to the two-variable counting logic is incomparable.

From a result of Babai and Ku\v{c}era~\cite{Babai79}, we know that a graph randomly selected from the uniform distribution over the class of all unlabeled $n$-vertex graphs (which has size equal to $2^{n(n-1)/2}$) is not identified by $C^2$ with probability equal to $(o(1))^n$. Moreover, in~\cite{Kucera87} Ku\v{c}era presented an efficient algorithm for labelling the vertices of random regular graphs from which it follows that the fraction of regular graphs which are not identified by $C^3$ tends to $0$ as the number of vertices tends to infinity. Therefore, almost all regular $n$-vertex graphs are in $\C^3_n$. Summarising, $\DS_n$ and $\C_n^2$ overlap and both are contained in $\C_n^3$.

\subsection{Lower Bounds}

Having established that $C^3$-equivalence is a refinement of co-spectrality, we now look at the relationship of the latter with equivalence in finite variable logics without counting quantifiers.  First of all, we note that some  co-spectral graphs can be distinguished by a formula using just two variables and no counting quantifiers. 

\begin{proposition}\label{prop:lb1}
There exists a pair of co-spectral graphs that can be distinguished in first-order logic with only two variables.
\end{proposition}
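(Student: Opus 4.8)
The plan is to exhibit a concrete pair of co-spectral non-isomorphic graphs together with a two-variable first-order sentence that separates them. The natural candidate is exactly the pair discussed above: $G = C_4 \cup K_1$ and $H = K_{1,4}$. Both have spectrum $\{-2,0,0,0,2\}$, hence are co-spectral (by Proposition~\ref{prop:pre1}, or directly from the closed-walk characterisation), yet they are not isomorphic, since $G$ is disconnected — indeed it has an isolated vertex — whereas $H$ is connected.

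Next I would write down a distinguishing sentence using only the variables $x$ and $y$ and no counting quantifiers. A clean choice is
\[
\varphi \;:=\; \exists x\, \forall y\, \lnot E(x,y),
\]
which asserts the existence of an isolated vertex. Evaluating it: in $G$ the vertex coming from the $K_1$ summand witnesses $\exists x$, so $G \models \varphi$; in $H = K_{1,4}$ every vertex is incident to an edge — each leaf is adjacent to the centre, and the centre to every leaf — so $H \not\models \varphi$. Since $\varphi$ uses only two distinct variables and no counting quantifiers, this already establishes the claim. (Equally well one could take $\exists x\, \forall y\,(x = y \lor E(x,y))$, which holds in $H$ via the centre of the star but fails in $G$ because $G$ has maximum degree $2 < 4$.)

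The only computation to verify is that $G$ and $H$ are genuinely co-spectral, and this is routine: $C_4$ has eigenvalues $2,0,0,-2$, so adjoining an isolated vertex appends a further $0$, while the bipartite graph $K_{1,4}$ has eigenvalues $\pm 2$ together with $0$ of multiplicity three; alternatively one counts the closed walks of each length up to $5$ and invokes Proposition~\ref{prop:pre1}. There is no real obstacle here — the point of the proposition is simply the contrast with Proposition~\ref{prop:wlk1}: co-spectrality, unlike $C^3$-equivalence, fails to imply indistinguishability even in the weak two-variable logic without counting.
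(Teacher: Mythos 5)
Your proof is correct and takes essentially the same approach as the paper: it uses the same co-spectral pair $C_4\cup K_1$ and $K_{1,4}$ and the same two-variable sentence $\exists x\,\forall y\,\lnot E(x,y)$ asserting the existence of an isolated vertex. The extra verification of the spectra is fine but already supplied by the example in Section~\ref{wlk}.
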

\begin{proof}

Let us consider the following two-variable first-order sentence: 
$$\psi:= \exists x\forall y\ \lnot E(x, y). $$

For any graph $G$ we have that $G\models\psi$ if, and only if, there is an isolated vertex in $G$. Hence $C_4\cup K_1\models\psi$ and $K_{1,4}\not\models\psi$. Therefore, $C_4\cup K_1\not\equiv^2 K_{1,4}$. \qed
\end{proof}

Next, we show that counting quantifiers are essential to the argument
from the previous section in that co-spectrality is not subsumed by
equivalence in any finite-variable fragment of first-order logic in
the absence of such quantifiers.  Let $L^k$ denote the fragment of
first-order logic in which each formula has at most $k$ distinct
variables. 


For each $r,s\geq0$, the \emph{extension axiom} $\eta_{r,s}$ is the first-order sentence $$
\forall x_1\dots\forall x_{r+s}\Bigg(\bigg(\bigwedge_{i\neq j} x_i\neq x_j\bigg)\rightarrow\exists y\bigg(\bigwedge_{i\leq r} E(x_i,y)\land\bigwedge_{i>r}\lnot E(x_i,y)\wedge x_i\neq y\bigg)\Bigg). $$

A graph $G$ satisfies the \emph{$k$-extension property} if $G\models\eta_{r,s}$ and $r+s=k$. In~\cite{Kolaitis92} Kolaitis and Vardi proved that if the graphs $G$ and $H$ both satisfy the $k$-extension property, then there is no formula of $L^{k}$ that can distinguish them. If this happens, we write $G\equiv^{k} H$. Fagin~\cite{Fagin76} proved that for each $k\geq 0$, almost all graphs satisfy the $k$-extension property. Hence almost all graphs are not distinguished by any formula of $L^k$.

Let $q$ be a prime power such that $q\equiv 1$ (mod 4). The \emph{Paley graph} of order $q$ is the graph $P(q)$ with vertex set $\GF(q)$, the finite field of order $q$, where two vertices $i$ and $j$ are adjacent if there is a positive integer $x$ such that $x^2 \equiv (i-j)$ (mod $q$). Since $q\equiv 1$ (mod 4) if, and only if, $x^2\equiv -1$ (mod $q$) is solvable, we have that $-1$ is a square in $\GF(q)$ and so, $(j-i)$ is a square if and only if $-(i-j)$ is a square. Therefore, adjacency in a Paley graph is a symmetric relation and so, $P(q)$ is undirected. Blass, Exoo and Harary~\cite{Blass81} proved that  if $q$ is greater than $k^2 2^{4k}$, then $P(q)$ satisfies the $k$-extension property.

Now, let $q=p^r$ with $p$ an odd prime, $r$ a positive integer, and $q\equiv 1$ (mod $3$). The \emph{cubic Paley graph} $P^3(q)$ is the graph whose vertices are elements of the finite field $\GF(q)$, where two vertices $i,j\in\GF(q)$ are adjacent if and only if their difference is a cubic residue, i.e. $i$ is adjacent to $j$ if, and only if, $i-j = x^3$ for some $x\in\GF(q)$. Note that $-1$ is a cube in $\GF(q)$ because $q\equiv 1$ (mod $3$) is a prime power, so $i$ is adjacent to $j$ if, and only if, $j$ is adjacent to $i$. In~\cite{Ananchuen06} it has been proved that $P^3(q)$ has the $k$-extension property whenever $q\geq k^2 2^{4k-2}$.  

The \emph{degree} of vertex $v$ in a graph $G$ is the number $d(v):= |\{\{v,u\}\in E:u\in V_G\}|$ of vertices that are adjacent to $v$. A graph $G$ is \emph{regular} of degree $d$ if every vertex is adjacent to exactly $d$ other vertices, i.e.\ $d(v)=d$ for all $v\in V_G$. So, $G$ is regular of degree $d$ if, and only if, each row of its adjacency matrix adds up to $d$. It can been shown that the Paley graph $P(q)$ is regular of degree $(q-1)/2$~\cite{GodsilRoyle}. Moreover, it has been proved that the cubic Paley graph $P^3(q)$ is regular of degree $(q-1)/3$~\cite{Elsawy12}.  

\begin{lemma}\label{lem:lb1}
Let $G$ be a regular graph of degree $d$. Then $d\in\sp(G)$ and for each $\theta\in\sp(G)$, we have $|\theta|\leq d$. Here $|\cdot|$ is the operation of taking the absolute value.
\end{lemma}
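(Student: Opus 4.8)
The plan is to prove both claims using the fact that $A_G$ is a nonnegative symmetric matrix and that $d$-regularity means the all-ones vector is an eigenvector. First I would establish that $d \in \sp(G)$: if $G$ is regular of degree $d$, then every row of $A_G$ sums to $d$, so $A_G \mathbf{1} = d\mathbf{1}$ where $\mathbf{1}$ is the all-ones vector; since $\mathbf{1} \neq 0$, this exhibits $d$ as an eigenvalue with eigenvector $\mathbf{1}$.

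For the bound $|\theta| \le d$, I would take an eigenvector $u$ for an arbitrary eigenvalue $\theta$, so $A_G u = \theta u$, and pick an index $i$ maximising $|u_i|$ among the coordinates of $u$; note $|u_i| > 0$ since $u \neq 0$. Reading off the $i$-th coordinate of the eigenvector equation gives $\theta u_i = \sum_{j} (A_G)_{ij} u_j$, and since there are exactly $d$ indices $j$ with $(A_G)_{ij} = 1$ and the rest contribute zero, the triangle inequality yields $|\theta|\,|u_i| = |\sum_j (A_G)_{ij} u_j| \le \sum_j (A_G)_{ij} |u_j| \le d\,|u_i|$. Dividing through by $|u_i| > 0$ gives $|\theta| \le d$. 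One should also note that $\theta$ is real (hence $|\theta|$ makes sense) because $A_G$ is real symmetric, as recorded in the preliminaries.

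There is no serious obstacle here; the only point requiring a little care is the choice of the coordinate of maximum absolute value, which is what makes the triangle-inequality estimate tight enough to get exactly $d$ rather than something weaker like $n$. Everything else is a direct computation from the definitions of regularity, eigenvector, and adjacency matrix already in place.
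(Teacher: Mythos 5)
Your proof is correct. The first half (exhibiting $d$ as an eigenvalue via $A_G\mathbf{1}=d\mathbf{1}$) is exactly what the paper does. For the bound $|\theta|\le d$, the paper argues in the contrapositive: for any $s$ with $|s|>d$, the matrix $S=sI-A_G$ has $|S_{ii}|>\sum_{j\neq i}|S_{ij}|$ in every row, so $S$ is strictly diagonally dominant and hence nonsingular by the Levy--Desplanques theorem, meaning $s$ is not a root of the characteristic polynomial. You instead argue directly: take an eigenvector, select the coordinate of maximum absolute value, and apply the triangle inequality to the corresponding row of the eigenvalue equation. These are really two faces of the same estimate --- the standard proof that strictly diagonally dominant matrices are invertible is precisely your maximum-coordinate argument applied to a putative null vector --- but your version is self-contained and avoids invoking diagonal dominance as a black box, at the cost of being slightly longer. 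Your closing remark that the choice of the maximizing coordinate is what yields the sharp bound $d$ is exactly the right point of care, and your observation that the eigenvalues are real (so $|\theta|$ is the ordinary absolute value) is a harmless extra; the triangle-inequality step would go through verbatim even for complex $\theta$.
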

\begin{proof}
Let us denote by $\1$ the all-ones vector. Then $A_G\1=d\1$. Therefore, $d\in\sp(G)$. Now, let $s$ be such that $|s| > d$. Then, for each row $i$, $$|S_{ii}|>\sum_{j\neq i}|S_{ij}|$$ where $S=s I - A_G$. Therefore, the matrix $S$ is strictly diagonally dominant, and so $det(sI-A_G)\neq 0$. Hence $s$ is not an eigenvalue of $G$. \qed
\end{proof}

\begin{lemma}\label{lem:lb2}
Let $G$ and $H$ be regular graphs of distinct degrees. Then $G$ and $H$ do not have the same spectrum.
\end{lemma}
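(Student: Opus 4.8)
The plan is to use Lemma~\ref{lem:lb1} together with the equivalence of co-spectrality and equality of characteristic polynomials from Proposition~\ref{prop:pre1}. Suppose $G$ is regular of degree $d_G$ and $H$ is regular of degree $d_H$ with $d_G \neq d_H$. By Lemma~\ref{lem:lb1}, the largest eigenvalue of $G$ is exactly $d_G$: indeed $d_G \in \sp(G)$ and every eigenvalue $\theta$ of $G$ satisfies $|\theta| \le d_G$, so $d_G = \max \sp(G)$. Likewise $d_H = \max \sp(H)$. Since $d_G \neq d_H$, the two spectra have different maximum elements, hence $\sp(G) \neq \sp(H)$, which is what we want.

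First I would note that to invoke Lemma~\ref{lem:lb1} we only need that $G$ and $H$ are regular, which is given. The only subtlety is making the step ``$d$ is the largest eigenvalue'' fully rigorous: from $|\theta| \le d$ we get $\theta \le d$ for all $\theta \in \sp(G)$, and since $d$ itself occurs in $\sp(G)$, it is the maximum. I would spell this out in one line. Then the contrapositive formulation (``if the degrees differ, the spectra differ'') is immediate, since the maximum of a multi-set is determined by the multi-set.

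Alternatively, one could phrase the argument via traces using Proposition~\ref{prop:pre1}: for a regular graph of degree $d$ on $n$ vertices, the number of closed walks of length $2$ equals $\mathrm{tr}(A_G^2) = \sum_v d(v) = nd$, so if $G$ and $H$ had the same spectrum they would have the same number of vertices (from $\mathrm{tr}(A^0)$, or from the degree of the characteristic polynomial) and the same value of $nd$, forcing $d_G = d_H$. But the eigenvalue argument via Lemma~\ref{lem:lb1} is cleaner and more self-contained, so that is the route I would take.

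There is essentially no hard part here; this is a short corollary of the preceding lemma. If anything, the only thing to be careful about is not assuming $G$ and $H$ have the same number of vertices (the statement does not require it) — but the maximum-eigenvalue argument does not need that hypothesis at all, which is another reason to prefer it over the trace-based approach.
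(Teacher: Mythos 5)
Your proof is correct and follows essentially the same route as the paper: the paper likewise observes that the degree is the largest eigenvalue of a regular graph (via $A_G\mathbf{1}=d\mathbf{1}$ and the bound from Lemma~\ref{lem:lb1}) and concludes that distinct maxima force distinct spectra. If anything, your version is slightly more explicit than the paper's in citing Lemma~\ref{lem:lb1} to justify that the degree is the \emph{maximum} eigenvalue rather than merely an eigenvalue.
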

\begin{proof}
Suppose that $G$ is regular of degree $s$ and $H$ is regular of degree $t$, with $s\neq t$. Then $A_G\1=s\1$ and $A_H\1=t\1$, where $\1$ is the all-ones vector. Therefore, $s$ is the greatest eigenvalue in the spectrum of $G$ and $t$ is the greatest eigenvalue in the spectrum of $H$. Hence $\sp(G)\neq\sp(H)$. \qed
\end{proof}

\begin{proposition}\label{prop:lb2}
For each $k\geq 1$, there exists a pair $G_k, H_k$ of graphs which are not co-spectral, such that $G_k$ and $H_k$ are not distinguished by any formula of $L^{k}$.
\end{proposition}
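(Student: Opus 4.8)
The plan is to exhibit, for each $k$, two Paley-type graphs on the same vertex set but of different degrees, both of which satisfy the $k$-extension property. Concretely, I would fix a prime $q$ with $q\equiv 1\pmod{12}$ and $q>k^2 2^{4k}$; such a prime exists by Dirichlet's theorem on primes in arithmetic progressions, since $\gcd(1,12)=1$ and hence there are infinitely many primes congruent to $1$ modulo $12$. Because $q\equiv 1\pmod 4$, the Paley graph $P(q)$ is defined; because $q$ is an odd prime with $q\equiv 1\pmod 3$, the cubic Paley graph $P^3(q)$ is defined. I then set $G_k:=P(q)$ and $H_k:=P^3(q)$, both graphs on $q$ vertices.

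Next I would verify the two required properties. For indistinguishability in $L^k$: since $q>k^2 2^{4k}\geq k^2 2^{4k-2}$, the result of Blass, Exoo and Harary~\cite{Blass81} gives that $P(q)$ has the $k$-extension property, and the result of~\cite{Ananchuen06} gives that $P^3(q)$ has it as well. By the theorem of Kolaitis and Vardi~\cite{Kolaitis92}, two graphs that both satisfy the $k$-extension property are not distinguished by any formula of $L^k$; thus $G_k\equiv^k H_k$. For non-cospectrality: $P(q)$ is regular of degree $(q-1)/2$ while $P^3(q)$ is regular of degree $(q-1)/3$, and since $q>1$ these two degrees are distinct, so Lemma~\ref{lem:lb2} yields $\sp(G_k)\neq\sp(H_k)$. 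Putting these together gives the proposition.

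There is essentially no deep obstacle here: the statement is an assembly of the cited facts. The only point that needs care is the simultaneous bookkeeping — choosing a single modulus ($12$) that legalises both Paley constructions at once, a single size threshold that forces both $k$-extension bounds, and arguing that the resulting degrees are provably unequal. One could also give a cheaper variant using only ordinary Paley graphs, taking $G_k=P(q)$ and $H_k=P(q')$ for two distinct primes $q,q'\equiv 1\pmod 4$ each exceeding $k^2 2^{4k}$: both then have the $k$-extension property, and having distinct degrees $(q-1)/2\neq(q'-1)/2$ they are not co-spectral by Lemma~\ref{lem:lb2}. I would still prefer the $P(q)$ versus $P^3(q)$ construction, since it keeps the number of vertices fixed and thereby shows that the failure of co-spectrality is not a mere cardinality artefact.
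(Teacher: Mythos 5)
Your proof is correct and follows essentially the same route as the paper: the pair $P(q)$ versus $P^3(q)$ on a common $q \equiv 1 \pmod{12}$ exceeding $k^2 2^{4k}$, the $k$-extension property via Blass--Exoo--Harary and~\cite{Ananchuen06} combined with Kolaitis--Vardi, and the degree argument via Lemma~\ref{lem:lb2}. The only difference is that the paper avoids Dirichlet's theorem by taking $q_k = 13^{r_k}$ for a suitable explicit exponent $r_k$ (since $13 \equiv 1 \pmod{12}$, every power of $13$ is $\equiv 1$ both mod $3$ and mod $4$), which is a slightly more elementary way of producing the required prime power.
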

\begin{proof}
For any positive integer $r$ we have that $13^r\equiv 1$ (mod $3$) and $13^r\equiv 1$ (mod $4$). For each  $k\geq 1$, let $r_k$ be the smallest integer greater than $2 (k \log(4)+\log(k))/\log(13)$, and let $q_k = 13^{r_k}$. Hence $q_k>k^2 2^{4k}$. Now, let  $G_k=P(q_k)$ and $H_k=P^3(q_k)$. Then $G_k$ and $H_k$ both satisfy the $k$-extension property, and so $G_k \equiv^{k} H_k$. Since the degree of $G_k$ is $(13^{r_k}-1)/2$ and the degree of $H_k$ is $(13^{r_k}-1)/3$, by Lemma~\ref{lem:lb1} we conclude that $\sp(G_k)\neq\sp(H_k)$. \qed
\end{proof}

So having the same spectrum is a property of graphs that does not follows from any finite collection of extension axioms, or equivalently, from any first-order sentence with asymptotic probability 1.

\renewcommand{\b}[1]{\mathbf{#1}}
\newcommand{\aut}{\mathrm{Aut}}
\newcommand{\srg}{\mathrm{srg}}
\newcommand{\tp}{\mathrm{tp}}

\section{Isomorphism Approximations}\label{sym}

\subsection{WL Equivalence}

The automorphism group $\aut(G)$ of $G$ acts naturally on the set $V^k_G$ of all $k$-tuples of vertices of $G$, and the set of orbits of $k$-tuples under the action of $\aut(G)$ form a corresponding partition of $V^k_G$. The \emph{$k$-dimensional Weisfeiler-Leman algorithm} is a combinatorial method that tries to approximate the partition induced by the orbits of $\aut(G)$ by labelling the $k$-tuples of vertices of $G$. For the sake of completeness, here we give a brief overview of the algorithm.

The $1$-dimensional Weisfeiler-Leman algorithm has the following steps: first, label each vertex $v\in V_G$ by its degree $d(v)$. The set $N(v):=\{u:\{v,u\}\in E_G\}$ is called the \emph{neighborhood} of $v\in V_G$ and so, the degree of $v$ is just the number of neighbours it has, i.e. $d(v)=|N(v)|$. In this way we have defined a partition $P_{0}(G)$ of $V_G$. The number of labels is equal to the number of different degrees. Hence $P_{0}(G)$ is the degree sequence of $G$. Then, relabel each vertex $v$ with the multi-set of labels of its neighbours, so each label $d(v)$ is substituted for $\{d(v), \{d(u):u\in N(v)\}\}$. Since these are multi-sets they might contain repeated elements. We get then a partition $P_{1}(G)$ of $V_G$ which is either a refinement of $P_{0}(G)$ or identical to $P_{0}(G)$. Inductively, the partition $P_{t}(G)$ is obtained from the partition $P_{t-1}(G)$, by constructing for each vertex $v$ a new multi-set that includes the labels of its neighbours, as it is done in the previous step. The algorithm halts as soon as the number of labels does not increase anymore. We denote the resulting partition of $V_G$ by $P^1_G$.  

Now we describe the algorithm for higher dimensions. Recall that we are working in the first-order language of graphs $L=\{E\}$. Now, for each graph $G$ and each $k$-tuple $\b{v}$ of vertices of $G$ we define the \emph{(atomic) type of $\b{v}$} in $G$ as the set $\tp^k_G(\b{v})$ of all atomic $L$-formulas $\phi(\b{x})$ that are true in $G$ when the variables of $\b{x}$ are substituted for vertices of $\b{v}$. More formally, for $k>1$ we let
\[
\tp^k_G(\b{v}) := \{\phi(\b{x}): |\b{x}|\leq k, G\models\phi(\b{v})\}
\] 
where, $|\b{x}|$ denotes the number of entries the tuple $\b{x}$ have, and each $\phi(\b{x})$ is either $x_i=x_j$ or $E(x_i,x_j)$ for $1\leq i, j\leq k$. Essentially, the formulas of $\tp^k_G(\b{v})$ give us the complete information about the structural relations that hold  between the vertices of $\b{v}$. If $u\in V_G$ and $1\leq i\leq k$, let $\b{v}_i^u$ denote the result of substituting $u$ in the $i$-th entry of $\b{v}$. 

For each $k> 1$ the $k$-dimensional Weisfeiler-Leman algorithm proceeds as follows: first, label the $k$-tuples of vertices with their types in $G$, so each $k$-tuple $\b{v}$ is labeled with $\ell_0(\b{v}):=\tp^k_G(\b{v})$; this induces a partition $P^k_0(G)$ of the $k$-tuples of vertices of $G$. Inductively, refine the partition $P^k_i(G)$ of $V^k_G$ by relabelling the $k$-tuples so that each label $\ell_i(\b{v})$ is substituted for $\ell_{i+1}(\b{v}):=\{\ell_{i}(\b{v}),\{\ell_{i}(\b{v}_1^u),\dots,\ell_{i}(\b{v}_k^u):u\in V_G\}\}$. The algorithm continues refining the partition of $V^k_G$ until it gets to a step $t\geq 1$, where $P_t^k(G) = P_{t-1}^k(G)$; then it halts. We denote the resulting partition of $V^k_G$ by $P^k_G$. 

Notice that for any fixed $k\geq 1$, the partition $P^k_G$ of $k$-subsets is obtained after at most $|V_G|^k$ steps. If the partitions $P^k_G$ and $P^k_H$ of graphs $G$ and $H$ are the same multi-set of labels obtained by the $k$-dimensional Weisfeiler-Leman algorithm, we say that $G$ and $H$ are \emph{$k$-WL equivalent}. In~\cite{Cai92}, Cai, F\"urer and Immerman proved that two graphs $G$ and $H$ are $C^{k+1}$-equivalent if, and only if, $G$ and $H$ are $k$-WL equivalent. 

\subsection{Symmetric Powers}
The \emph{$k$-th symmetric power} $G^{\{k\}}$ of a graph $G$ is a graph where each vertex represents a $k$-subset of vertices of $G$, and two $k$-subsets are adjacent if their symmetric difference is an edge of $G$. Formally, the vertex set $V_{G^{\{k\}}}$ of $G^{\{k\}}$ is defined to be the set of all subsets of $V_G$ with exactly $k$ elements, and for every pair of $k$-subsets of vertices $V=\{v_1,\dots,v_k\}$ and $U=\{u_1,\dots,u_k\}$, we have $\{V, U\}\in E_{G^{\{k\}}}$ if, and only if, $(V\smallsetminus U)\cup(U\smallsetminus V) \in E_G$. The symmetric powers are related to a natural generalisation of the concept of a walk in a graph. A \emph{$k$-walk} of length $l$ in $G$ is a sequence $(V_0, V_1,\dots, V_l)$ of $k$-subsets of vertices, such that the symmetric difference of $V_{i-1}$ and $V_{i}$ is an edge of $G$ for $1\leq i\leq l$. A $k$-walk is said to be closed if $V_0=V_l$. The connection with the symmetric powers is that a $k$-walk in $G$ corresponds to an ordinary walk in $G^{\{k\}}$. Therefore, two graphs have the same total number of closed $k$-walks of every length if, and only if, their $k$-th symmetric powers are co-spectral. For each $k\geq 1$, there exist infinitely many pairs of non-isomorphic graphs $G$ and $H$ such that the $k$-th symmetric powers $G^{\{k\}}$ and $H^{\{k\}}$ are co-spectral~\cite{Barghi09}.

Alzaga, Iglesias and Pignol~\cite{Alzaga10} have shown that given two graphs $G$ and $H$, if $G$ and $H$ are $2k$-WL equivalent, then their $k$-th symmetric powers $G^{\{k\}}$ and $H^{\{k\}}$ are co-spectral. This two facts combined allow us to deduce the following generalisation of Proposition~\ref{prop:wlk1}.

\begin{proposition}\label{prop:sym1}
Given graphs $G$ and $H$ and a positive integer $k$, if $G\equiv_{C}^{2k+1}H$ then $G^{\{k\}}$ and $H^{\{k\}}$ are co-spectral.
\end{proposition}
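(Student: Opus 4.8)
The plan is to obtain this as an immediate consequence of the two external results quoted just above: the theorem of Cai, F\"urer and Immerman relating $C^{j+1}$-equivalence to $j$-WL equivalence, and the theorem of Alzaga, Iglesias and Pignol relating $2k$-WL equivalence to co-spectrality of $k$-th symmetric powers. Concretely, I would first instantiate the Cai--F\"urer--Immerman correspondence with $j = 2k$: since $2k+1 = (2k)+1$, the hypothesis $G \equiv_C^{2k+1} H$ is equivalent to the statement that $G$ and $H$ are $2k$-WL equivalent, i.e.\ the $2k$-dimensional Weisfeiler--Leman algorithm returns the same multi-set of labels on $G$ and on $H$, so that $P^{2k}_G = P^{2k}_H$.

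Second, I would feed this into the Alzaga--Iglesias--Pignol theorem, which says precisely that $2k$-WL equivalence of $G$ and $H$ implies that $G^{\{k\}}$ and $H^{\{k\}}$ are co-spectral. Chaining the two implications yields the claim. As a sanity check, taking $k=1$ the statement reads ``$G \equiv_C^3 H$ implies $G^{\{1\}}$ and $H^{\{1\}}$ are co-spectral'', and since $G^{\{1\}} \cong G$ (the singletons $\{v\},\{u\}$ are adjacent in $G^{\{1\}}$ exactly when $\{v,u\} \in E_G$), this recovers Proposition~\ref{prop:wlk1}; so the proposition is indeed the announced generalisation. There is no genuinely hard step once the two black-box results are available; the only thing to watch is the off-by-one bookkeeping linking the number of variables, the WL dimension, and the symmetric power taken, namely $C^{2k+1} \leftrightarrow 2k\text{-WL} \leftrightarrow G^{\{k\}}$.

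If one wanted a more self-contained argument paralleling the construction of the sentences $\phi^l_m$ in Section~\ref{wlk}, the alternative would be to build, for each length $l$ and each count $m$, a sentence of $C^{2k+1}$ asserting that $G^{\{k\}}$ has exactly $m$ closed walks of length $l$ --- equivalently, that $G$ has exactly $m$ closed $k$-walks of length $l$ --- using $k$ variables to name each of the two $k$-subsets involved in a walk and one extra variable to carry the single vertex that is swapped at each step, with counting quantifiers to tally the branching, and then invoke Proposition~\ref{prop:pre1} applied to $A_{G^{\{k\}}}$. The obstacle in that route is showing that $2k+1$ variables really suffice: one must reuse the $2k$ ``endpoint'' variables carefully across the inductive step (exactly as the three-variable construction reuses $x$ and $z$) and check that the condition ``the symmetric difference is an edge of $G$'' is expressible with the available variables. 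Since this amounts to re-deriving (a special case of) the Alzaga--Iglesias--Pignol result, I would present the two-line reduction above as the proof and leave the direct construction as a remark.
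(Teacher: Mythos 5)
Your proof is correct and is exactly the argument the paper intends: instantiate the Cai--F\"urer--Immerman correspondence with dimension $2k$ so that $G\equiv_C^{2k+1}H$ gives $2k$-WL equivalence, then apply the Alzaga--Iglesias--Pignol theorem to conclude that $G^{\{k\}}$ and $H^{\{k\}}$ are co-spectral. The $k=1$ sanity check and the remark on a hypothetical direct construction are fine additions but not needed.
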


\subsection{Cellular Algebras}

Originally, Weisfeiler and Leman~\cite{WL68} presented their algorithm in terms of algebras of complex matrices. Given two matrices $A$ and $B$ of the same order, their \emph{Schur product} $A\circ B$ is defined by $(A\circ B)_{ij}:=A_{ij}B_{ij}$. For a complex matrix $A$, let $A^{\ast}$ denote the adjoint (or conjugate-transpose) of $A$. A \emph{cellular algebra} $W$ is an algebra of square complex matrices that contains the identity matrix $I$, the all-ones matrix $J$, and is closed under adjoints and Schur multiplication. Thus, every cellular algebra has a unique basis $\{A_1,\dots,A_m\}$ of binary matrices which is closed under adjoints and such that $\sum_i A_i = J$.

The smallest cellular algebra is the one generated by the span of $I$ and $J$. The cellular algebra of an $n$-vertex graph $G$ is the smallest cellular algebra $W_G$ that contains $A_G$. Two cellular algebras $W$ and $W^{\prime}$ are isomorphic if there is an algebra isomorphism $h:W\to W^{\prime}$, such that $h(A\circ B)=h(A)\circ h(B)$, $h(A)^{\ast}=h(A^{\ast})$ and $h(J)=J$. Given an isomorphism $h:W\to W^{\prime}$ of cellular algebras, for all $A\in W$ we have that $A$ and $h(A)$ are co-spectral (see Lemma~3.4 in~\cite{Friedland89}). So the next result is immediate.

\begin{proposition}\label{prop:ca1}
Two graphs $G$ and $H$ are co-spectral if there is an isomorphism of $W_G$ and $W_H$ that maps $A_G$ to $A_H$.
\end{proposition}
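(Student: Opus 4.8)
The plan is to derive the proposition directly from the fact quoted immediately before it, Lemma~3.4 of~\cite{Friedland89}: every isomorphism $h\colon W\to W^{\prime}$ of cellular algebras preserves spectra of individual matrices, so $A$ and $h(A)$ are co-spectral for each $A\in W$. Granting this, the argument is one line. Let $h\colon W_G\to W_H$ be an isomorphism of cellular algebras with $h(A_G)=A_H$. Since $A_G\in W_G$, the lemma gives $\sp(G)=\sp(A_G)=\sp(h(A_G))=\sp(A_H)=\sp(H)$, so $G$ and $H$ are co-spectral. The hypothesis $h(A_G)=A_H$ cannot be dropped: an isomorphism of cellular algebras carries the distinguished $0$--$1$ basis of $W_G$ onto that of $W_H$, but $A_G$ is only a sum of some of these basis matrices and an arbitrary isomorphism need not match it up with $A_H$.

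For the reader's benefit I would also recall why the quoted lemma is true, since that is the only real content. Fix the distinguished basis $\{A_1,\dots,A_m\}$ of $W_G$ with $\sum_i A_i=J$ and structure constants given by $A_iA_j=\sum_k p_{ij}^k A_k$. An algebra isomorphism $h$ that fixes $J$ and commutes with the adjoint must send this basis bijectively onto the distinguished basis of $W_H$, preserving the constants $p_{ij}^k$ and the pairing $A_i\mapsto A_i^{\ast}$ on basis elements. The key point is that for every $A=\sum_i c_iA_i\in W_G$ and every $l\geq 1$, the trace $tr(A^l)$ expands as a polynomial in the coefficients $c_i$, the structure constants $p_{ij}^k$, and the traces $tr(A_k)$ of the basis matrices; and the numbers $tr(A_k)$ are themselves invariants of the abstract algebra (they vanish unless $A_k$ is one of the basis matrices whose sum is the identity element, and for those they are determined by the structure constants, as in the standard structure theory of cellular algebras). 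Hence $tr(A^l)=tr(h(A)^l)$ for all $l\geq 1$, and Proposition~\ref{prop:pre1} turns this equality of power-trace sequences into co-spectrality of $A$ and $h(A)$.

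Thus the proposition has essentially no obstacle of its own: once the cited spectral invariance is in hand, one simply specialises $A=A_G$ and invokes Proposition~\ref{prop:pre1}. The only genuine work sits inside Friedland's lemma, namely the verification that the trace of a power of a matrix lying in a cellular algebra depends only on the algebra's abstract multiplication table and its adjoint operation, not on the concrete matrix realisation; everything after that is routine bookkeeping. Accordingly I would present the proof in its short form, attributing the spectral invariance to~\cite{Friedland89}, and relegate any discussion of the underlying trace identity to a remark.
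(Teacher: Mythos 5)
Your proof is correct and takes the same route as the paper: the paper simply observes that, by Lemma~3.4 of~\cite{Friedland89}, any isomorphism of cellular algebras sends each matrix to a co-spectral one, and then declares the proposition immediate by specialising to $A=A_G$. Your additional sketch of why Friedland's lemma holds is a reasonable elaboration, but the core argument coincides with the paper's.
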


In general, the converse of Proposition~\ref{prop:ca1} is not true. That is, there are known pairs of co-spectral graphs whose corresponding cellular algebras are non-isomorphic (see, e.g. \cite{Barghi09}).
The elements of the standard basis of a cellular algebra correspond to the ``adjacency matrices'' of a corresponding coherent configuration. Coherent configurations where introduced by Higman in~\cite{Higman75} to study finite permutation groups. 
Coherent configurations are stable under the $2$-dimensional Weisfeiler-Leman algorithm. Hence two graphs $G$ and $H$ are $2$-WL equivalent if, and only if, there is an isomorphism of $W_G$ and $W_H$ that maps $A_G$ to $A_H$. 

\begin{proposition}\label{obs:ca1}
Given graphs $G$ and $H$ with cellular algebras $W_G$ and $W_H$, $G\equiv_C^3 H$ if, and only if, there is an isomorphism of $W_G$ and $W_H$ that maps $A_G$ to $A_H$. 
\end{proposition}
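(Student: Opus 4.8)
The plan is to chain together two equivalences that have already been assembled in the text, so that Proposition~\ref{obs:ca1} becomes essentially a corollary of results quoted just before it. On one side, the Cai--F\"urer--Immerman theorem gives that $G\equiv_C^3 H$ if and only if $G$ and $H$ are $1$-WL equivalent — wait, more precisely $C^{k+1}$-equivalence coincides with $k$-WL equivalence, so $C^3$-equivalence is $2$-WL equivalence. On the other side, the discussion of coherent configurations notes that coherent configurations are the stable partitions produced by the $2$-dimensional Weisfeiler--Leman algorithm, and that two graphs are $2$-WL equivalent precisely when there is an isomorphism of their cellular algebras $W_G$ and $W_H$ carrying $A_G$ to $A_H$. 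Concatenating these two bi-implications yields the claim. So the proof is a short two-line argument: $G\equiv_C^3 H \iff G$ and $H$ are $2$-WL equivalent (by~\cite{Cai92}) $\iff$ there is an isomorphism of $W_G$ and $W_H$ mapping $A_G$ to $A_H$ (by the correspondence between $2$-WL stable colourings and coherent configurations / cellular algebras).

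The key steps, in order, are: (1) invoke~\cite{Cai92} with $k=2$ to replace $C^3$-equivalence by $2$-WL equivalence; (2) recall that the output partition $P^2_G$ of the $2$-dimensional Weisfeiler--Leman algorithm refines to (indeed stabilises at) a coherent configuration, whose associated matrix algebra is exactly the cellular algebra $W_G$ — here one should note that $W_G$ was defined as the \emph{smallest} cellular algebra containing $A_G$, and that this coincides with the span of the basis matrices of the stable $2$-WL colouring, because that colouring is the coarsest coherent configuration with respect to which $A_G$ is a union of classes; (3) observe that two coherent configurations obtained this way have the same multiset of $2$-WL labels exactly when the assignment of stable colours induces an algebra isomorphism $W_G\to W_H$ respecting $I$, $J$, adjoints and Schur product, and that $A_G$ and $A_H$ sit inside corresponding unions of basis cells, so this isomorphism can be taken to send $A_G$ to $A_H$; (4) conversely, such an isomorphism transports the whole cell structure, hence the $2$-WL labelling, so $G$ and $H$ are $2$-WL equivalent. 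Steps (3) and (4) together are just the statement already asserted in the paragraph preceding the proposition (\textit{"Hence two graphs $G$ and $H$ are $2$-WL equivalent if, and only if, there is an isomorphism of $W_G$ and $W_H$ that maps $A_G$ to $A_H$"}), so the proof may legitimately cite it rather than reprove it.

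The main obstacle is not really logical depth but making precise the identification in step (2): one must be careful that the cellular algebra $W_G$ generated by $A_G$ really is the coherent algebra produced by stabilising the $2$-WL colouring started from the atomic types $\tp^2_G(\mathbf v)$, rather than something finer or coarser. The cleanest way to handle this is to recall that the Weisfeiler--Leman stabilisation is exactly the standard procedure for computing the \emph{coherent closure} of the set $\{I, A_G, J - I - A_G\}$ of matrices, and that the coherent closure of $\{A_G\}$ (together with $I$ and $J$, which any cellular algebra contains) is by definition the smallest cellular algebra containing $A_G$. Once this identification is granted, the equivalence of isomorphism of these algebras (fixing $A_G\mapsto A_H$) with sameness of the stable colour multisets is routine and is precisely the content quoted from the preceding discussion. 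I would therefore present the proof as: by~\cite{Cai92}, $G\equiv_C^3 H$ iff $G$ and $H$ are $2$-WL equivalent; by the correspondence between the stable partition of the $2$-dimensional Weisfeiler--Leman algorithm and coherent configurations, together with the observation that $W_G$ is the coherent closure of $A_G$, the latter holds iff there is an isomorphism of $W_G$ and $W_H$ mapping $A_G$ to $A_H$; combining the two gives the result. \qed
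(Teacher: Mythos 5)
Your proposal is correct and follows exactly the route the paper intends: the proposition is stated as an immediate consequence of chaining the Cai--F\"urer--Immerman theorem ($C^3$-equivalence coincides with $2$-WL equivalence) with the assertion in the preceding paragraph that $2$-WL equivalence holds if and only if there is an isomorphism of $W_G$ and $W_H$ mapping $A_G$ to $A_H$. Your additional care in step (2), identifying $W_G$ with the coherent closure of $A_G$, is a useful elaboration of a point the paper leaves implicit, but the argument is the same.
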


\subsection{Strongly Regular Graphs}

A \emph{strongly regular graph} $\srg(n,r,\lambda,\mu)$ is a regular $n$-vertex graph of degree $r$ such that each pair of adjacent vertices has $\lambda$ common neighbours, and each pair of nonadjacent vertices has $\mu$ common neighbours. The numbers $n,r,\lambda,\mu$ are called the \emph{parameters} of $\srg(n,r,\lambda,\mu)$. It can be shown that the spectrum of a strongly regular graph is determined by its parameters~\cite{GodsilRoyle}. The complement of a strongly regular graph is strongly regular. Moreover, co-spectral strongly regular graphs have co-spectral complements. That is, two strongly regular graphs having the same parameters are co-spectral. Recall $J$ is the all-ones matrix.

\begin{lemma}
If $G$ is a strongly regular graph then $\{I , A_G, (J - I - A_G)\}$ form the basis for its corresponding cellular algebra $W_G$. 
\end{lemma}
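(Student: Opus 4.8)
The plan is to show that the linear span $W$ of the three matrices $I$, $A_G$ and $J-I-A_G$ is a cellular algebra containing $A_G$, and then to conclude that $W=W_G$ by minimality. Throughout I would write $A_0:=I$, $A_1:=A_G$ and $A_2:=J-I-A_G$. The one structural fact about $\srg(n,r,\lambda,\mu)$ that drives everything is the quadratic identity obtained by counting walks of length two: the $(i,j)$ entry of $A_G^2$ is the number of common neighbours of $v_i$ and $v_j$, which is $r$ if $i=j$, is $\lambda$ if $v_i$ and $v_j$ are adjacent, and is $\mu$ if $i\neq j$ and $v_i,v_j$ are non-adjacent. Since these three cases are exactly the supports of $A_0$, $A_1$ and $A_2$, this reads
\[
A_G^2 \;=\; r I + \lambda A_G + \mu (J-I-A_G) \;=\; r A_0 + \lambda A_1 + \mu A_2 .
\]
Regularity of degree $r$ also gives $A_G J = J A_G = r J$, and $J^2 = nJ$.

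First I would verify the cellular-algebra axioms for $W$. Each $A_i$ is a real symmetric binary matrix, so $W$ is closed under adjoints; the supports of $A_0,A_1,A_2$ are pairwise disjoint and exhaust all $n^2$ entries, so $A_i\circ A_j=\delta_{ij}A_i$ (closure under Schur product), $A_0+A_1+A_2=J\in W$, and $I=A_0\in W$. For closure under ordinary matrix multiplication it suffices to express each product $A_iA_j$ in $W$: products involving $A_0$ are trivial; $A_1A_1$ is the displayed identity; and $A_1A_2 = A_G(J-I-A_G) = rJ - A_G - A_G^2$ together with $A_2A_2 = (J-I-A_G)^2$ become linear combinations of $A_0,A_1,A_2$ after substituting $A_G J = J A_G = rJ$, $J^2=nJ$ and the displayed identity. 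Since $A_1$ and $A_2$ commute, the remaining products coincide with ones already handled. Thus $W$ is a cellular algebra containing $A_G$.

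Next comes minimality. By definition any cellular algebra containing $A_G$ must contain $I$ and $J$, hence also $J-I-A_G$, hence all of $W$; therefore $W\subseteq W_G$. As $W$ is itself a cellular algebra containing $A_G$, we conclude $W=W_G$. Finally, the matrices $A_0,A_1,A_2$ are nonzero — here I use the standing convention that a strongly regular graph is neither complete nor edgeless, so $A_1\neq 0$ and $A_2\neq 0$ — and having pairwise disjoint supports they are linearly independent; being binary, closed under transpose, and summing to $J$, they are precisely the canonical binary basis of $W_G$.

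I do not expect a genuine obstacle here: the argument is a routine verification. The only step requiring care is the coefficient bookkeeping in expanding $A_2A_2$ and $A_1A_2$ and checking that everything lands back in $W$ — which it must, since $A_G^2\in W$ and $J\in W$ — together with the minor point of excluding the degenerate strongly regular graphs so that the three matrices genuinely constitute a three-element basis.
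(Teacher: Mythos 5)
Your proof is correct, and it is in fact considerably more complete than the argument the paper itself gives. The paper's proof only verifies the formal properties that a standard basis of a cellular algebra must satisfy --- that $I$, $A_G$ and $J-I-A_G$ are binary, self-adjoint, and sum to $J$ --- and never invokes strong regularity at all; it omits both the verification that the span of these three matrices is closed under ordinary matrix multiplication and the minimality argument identifying that span with $W_G$. You correctly locate the real content of the lemma in the identity $A_G^2 = rI + \lambda A_G + \mu(J-I-A_G)$, which is exactly where the hypothesis $\srg(n,r,\lambda,\mu)$ enters, and you supply the two halves the paper leaves implicit: closure under multiplication (via $A_GJ=JA_G=rJ$, $J^2=nJ$ and the quadratic identity) gives $W\supseteq W_G$ is unnecessary while $W$ being a cellular algebra containing $A_G$ gives $W_G\subseteq W$, and conversely every cellular algebra containing $A_G$ contains $I$, $J$, and hence $J-I-A_G$, giving $W\subseteq W_G$. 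Your caveat about excluding complete and edgeless graphs (so that all three matrices are nonzero and the basis genuinely has three elements) is also a point the paper glosses over. In short: same intended route, but your version actually carries the load-bearing steps.
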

\begin{proof}
By definition, $W_G$ has a unique basis $\mathcal{A}$ of binary matrices closed under adjoints and so that \[\sum_{A\in\mathcal{A}} A = J.\]
Notice that $I,A_G$ and $J - I - A_G$ are binary matrices such that $I^{\ast}=I$, $A_G^{\ast}=A_G$ and $
(J - I - A_G)^{\ast} = J - I - A_G.$ Furthermore, 
\[I + A_G + (J - I - A_G) = J.\]
\end{proof}

There are known pairs of non-isomorphic strongly regular graphs with the same parameters (see, e.g. \cite{Brouwer84}). These graphs are not distinguished by the $2$-dimensional Weisfeiler-Leman algorithm 
since there is an algebra isomorphism that maps the adjacency matrix of one to the adjacency matrix of the other. Thus, for strongly regular graphs the converse of Proposition~\ref{prop:ca1} holds.

\begin{lemma}\label{lem:ca1} 
If $G$ and $H$ are two co-spectral strongly regular graphs, then there exists an isomorphism of $W_G$ and
$W_H$ that maps $A_G$ to $A_H$.
\end{lemma}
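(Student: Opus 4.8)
The plan is to write down an explicit candidate isomorphism using the description of the cellular algebra of a strongly regular graph from the preceding lemma, and then to observe that co-spectrality forces all the relevant structure constants to agree. First I would recover the parameters from the spectrum: if $G=\srg(n,r,\lambda,\mu)$ then $n$ is the number of eigenvalues counted with multiplicity, $r$ is the largest eigenvalue, and the two remaining eigenvalues $s>t$ satisfy $s+t=\lambda-\mu$ and $st=\mu-r$, so that $\mu=r+st$ and $\lambda=r+st+s+t$. Hence $\sp(G)=\sp(H)$ implies that $G$ and $H$ are strongly regular with the same parameters $(n,r,\lambda,\mu)$.

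Next, write $B_G:=J-I-A_G$ and $B_H:=J-I-A_H$, so that by the previous lemma $\{I,A_G,B_G\}$ is the standard basis of $W_G$ and $\{I,A_H,B_H\}$ is the standard basis of $W_H$. Let $h:W_G\to W_H$ be the linear map determined by $h(I)=I$, $h(A_G)=A_H$, $h(B_G)=B_H$. This is a linear bijection between two three-dimensional spaces, it maps $A_G$ to $A_H$, and $h(J)=h(I+A_G+B_G)=I+A_H+B_H=J$. I would then verify the three remaining closure conditions. All basis matrices are symmetric $0$-$1$ matrices, so adjoints are preserved trivially. The basis matrices of $W_G$ have pairwise disjoint supports and sum to $J$, so $X\circ Y=\delta_{XY}X$ for basis elements, and the same holds in $W_H$; hence $h$ respects Schur multiplication. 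For ordinary matrix multiplication, the defining identity $A_G^2=rI+\lambda A_G+\mu B_G$ together with $A_GJ=rJ$ and $J^2=nJ$ lets one compute every product $XY$ of basis elements of $W_G$ as an explicit integer combination $XY=\sum_{Z}c_{XYZ}Z$ whose coefficients $c_{XYZ}$ are polynomials in $n,r,\lambda,\mu$ only. Since $G$ and $H$ share these parameters, the structure constants coincide, so $h(XY)=\sum_Z c_{XYZ}h(Z)=h(X)h(Y)$ for all basis $X,Y$, and hence for all elements of $W_G$. Thus $h$ is an isomorphism of cellular algebras carrying $A_G$ to $A_H$.

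The main obstacle is this last multiplication check: one must be sure that the entire multiplication table of $\{I,A_G,B_G\}$ is generated by the single strong-regularity relation $A_G^2=rI+\lambda A_G+\mu B_G$ (plus regularity $A_GJ=rJ$ and $J^2=nJ$), so that every structure constant is a function of the parameters alone and nothing about the particular graph enters. A minor degenerate case also needs a word: if $G$ is complete then $B_G=0$ and the basis is just $\{I,A_G\}$ (and similarly if $G$ has no edges); but in those cases $G$ is determined by its spectrum, so $H\cong G$ and the conclusion is immediate, the same argument going through with the shorter basis.
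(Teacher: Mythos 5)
Your proof is correct, but it takes a genuinely different route from the paper's. The paper argues via similarity: since $A_G$ and $A_H$ are co-spectral real symmetric matrices, there is an orthogonal $Q$ with $QA_GQ^{T}=A_H$ and $Q(J-I-A_G)Q^{T}=J-I-A_H$, and it then invokes Friedland's characterisation that simultaneous similarity of the standard bases of two cellular algebras yields an isomorphism between them. You instead define the basis-to-basis map $h$ directly and verify by hand that all structure constants (for ordinary multiplication, Schur multiplication, adjoints, and the distinguished elements $I$ and $J$) are functions of the parameters $(n,r,\lambda,\mu)$ alone, which co-spectrality forces to agree; your recovery of the parameters from the spectrum via $\mu=r+st$ and $\lambda=r+st+s+t$ is the standard computation and is right. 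What your approach buys is self-containedness: it avoids the citation to Friedland and also sidesteps a point the paper glosses over, namely that to conjugate $J-I-A_G$ to $J-I-A_H$ one needs $QJQ^{T}=J$, i.e.\ one must choose $Q$ fixing the all-ones vector $\1$ (possible because $\1$ spans the $r$-eigenspace of both matrices when the graphs are connected, but this requires a word). What the paper's approach buys is brevity and the fact that Friedland's criterion applies to coherent configurations of any rank, not just the rank-three case of strongly regular graphs. Your handling of the degenerate complete/empty case is sensible; you might also note the disconnected case $mK_{r+1}$, where the three non-principal-eigenvalue formulas degenerate but the graph is DS, so the conclusion is again immediate.
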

\begin{proof}
The cellular algebras $W_G$ and $W_H$ of $G$ and $H$ have standard basis $\{I , A_G, (J - I - A_G)\}$ and $\{I , A_H, (J - I - A_H)\}$, respectively. Since $G$ and $H$ are co-spectral, there exist an orthogonal matrix $Q$ such that $QA_GQ^{T}=A_H$ and $Q(J - I - A_G)Q^{T}=(J - I - A_H)$. In~\cite{Friedland89}, Friedland has shown that two cellular algebras with standard bases $\{A_1,\dots,A_m\}$ and $\{B_1,\dots,B_m\}$ are isomorphic if, and only if, there is an invertible matrix $M$ such that $MA_iM^{-1}=B_i$ for $1\leq i\leq m$. As every orthogonal matrix is invertible, we can conclude that there exists an isomorphism of $W_G$ and
$W_H$ that maps $A_G$ to $A_H$.
\end{proof}


\begin{proposition}\label{prop:ca2}
Given two strongly regular graphs $G$ and $H$, the following statements are equivalent:
\begin{enumerate}
\item $G\equiv_C^3 H$;
\item $G$ and $H$ are co-spectral;
\item there is an isomorphism of $W_G$ and $W_H$ that maps $A_G$ to $A_H$.
\end{enumerate}
\end{proposition}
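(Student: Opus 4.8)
The plan is to prove the cycle of implications $(1)\Rightarrow(3)\Rightarrow(2)\Rightarrow(1)$, drawing each link from a result already established in the excerpt. The only genuinely new work is to assemble these pieces and confirm that each applies in the strongly regular setting; none of the three arrows requires a fresh construction.

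First I would establish $(1)\Rightarrow(3)$. This is not special to strongly regular graphs at all: it is exactly Proposition~\ref{obs:ca1}, which states that for arbitrary graphs $G$ and $H$, $G\equiv_C^3 H$ holds if, and only if, there is an isomorphism of $W_G$ and $W_H$ mapping $A_G$ to $A_H$. So $(1)$ and $(3)$ are equivalent with no hypothesis on the graphs, and in particular $(1)\Rightarrow(3)$ is immediate. Next, $(3)\Rightarrow(2)$ is Proposition~\ref{prop:ca1}, again with no restriction: an isomorphism of the cellular algebras sending $A_G$ to $A_H$ forces $A_G$ and $A_H$ to be co-spectral, since isomorphic cellular algebras map each element to a co-spectral one (Lemma~3.4 of~\cite{Friedland89}). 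The substantive direction for strongly regular graphs is $(2)\Rightarrow(3)$ — or, folding in the equivalence above, $(2)\Rightarrow(1)$ — and this is precisely the content of Lemma~\ref{lem:ca1}: two co-spectral strongly regular graphs admit an isomorphism of $W_G$ and $W_H$ taking $A_G$ to $A_H$. Chaining $(2)\Rightarrow(3)\Rightarrow(1)$ via Lemma~\ref{lem:ca1} followed by Proposition~\ref{obs:ca1} closes the loop.

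So the proof proper is short: cite Proposition~\ref{obs:ca1} for $(1)\Leftrightarrow(3)$, cite Proposition~\ref{prop:ca1} for $(3)\Rightarrow(2)$, and cite Lemma~\ref{lem:ca1} for $(2)\Rightarrow(3)$; conclude that all three are equivalent. The one point that deserves an explicit sentence is that Lemma~\ref{lem:ca1} genuinely closes the gap that fails in general — the excerpt notes that the converse of Proposition~\ref{prop:ca1} is false for arbitrary graphs (there are co-spectral graphs with non-isomorphic cellular algebras), and the reason it succeeds here is the rigidity of the cellular algebra of a strongly regular graph: by the earlier lemma its standard basis is always $\{I, A_G, J-I-A_G\}$, a rank-three algebra depending only on the parameters, so co-spectrality (equivalently, equality of parameters) leaves essentially no freedom.

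The main obstacle, such as it is, is purely expository rather than mathematical: one must make sure that the hypotheses of Lemma~\ref{lem:ca1} and of Proposition~\ref{obs:ca1} line up, namely that ``co-spectral strongly regular graphs'' in the former and the cellular-algebra characterization in the latter refer to the same notion of isomorphism (an algebra isomorphism commuting with Schur product, adjoint, and fixing $J$, and in addition sending the specified generator $A_G$ to $A_H$). Since both statements are phrased in terms of ``an isomorphism of $W_G$ and $W_H$ that maps $A_G$ to $A_H$'', they match verbatim, and there is nothing further to check. I would therefore present the proof as a three-line citation chain, with a remark pointing out that this proposition says co-spectrality, $C^3$-equivalence, $2$-WL equivalence, and cellular-algebra isomorphism all coincide on strongly regular graphs — in contrast to the strictly coarser behaviour of co-spectrality on general graphs.
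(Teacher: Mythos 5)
Your proof is correct and follows essentially the same route as the paper: a short citation chain whose only substantive ingredient is Lemma~\ref{lem:ca1} for $(2)\Rightarrow(3)$, combined with the unconditional equivalence $(1)\Leftrightarrow(3)$ from Proposition~\ref{obs:ca1}. The sole (immaterial) difference is that you close the loop back to $(2)$ via Proposition~\ref{prop:ca1} ($(3)\Rightarrow(2)$), whereas the paper cites Proposition~\ref{prop:wlk1} for $(1)\Rightarrow(2)$ directly; both are already-established facts, so the arguments are interchangeable.
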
 
\begin{proof}
Proposition~\ref{prop:wlk1} says that for all graphs $(1)$ implies $(2)$. From Proposition~\ref{obs:ca1}, we have $(1)$ if, and only if, $(3)$. By Lemma~\ref{lem:ca1}, if $(2)$ then $(3)$.
\end{proof}

\newcommand{\num}{\mathrm{num}}
\newcommand{\elem}{\mathrm{elem}}
\newcommand{\lex}{\mathrm{lex}}
\newcommand{\next}{\mathrm{next}}
\newcommand{\isom}{\mathrm{isom}}
\newcommand{\cospec}{\mathrm{cospec}}
\newcommand{\summ}{\mathrm{sum}}
\newcommand{\bR}{\ensuremath{\bar{R}}}

\newcommand{\countingTerm}[1]{\#_{#1}}
\newcommand{\tup}[1]{\vec{#1}}

\newcommand{\struct}[1]{\ensuremath{\mathbf #1}}
\newcommand{\univ}[1]{\ensuremath{\dom(\struct #1)}} 
\newcommand{\dom}{\mathrm{dom}}

\newcommand{\logicoperator}[1]{\mathbf{#1}}
\newcommand{\ifp}{\logicoperator{ifp}}
\newcommand{\pfp}{\logicoperator{pfp}}

\section{Definability in Fixed Point Logic with Counting}\label{pfp}

In this section, we consider the definability of co-spectrality and
the property DS in fixed-point logics with counting.  To be precise, we
show that co-spectrality is definable in \emph{inflationary fixed-point logic with
counting} ($\LFPC$) and  the class of graphs that are DS is definable in
\emph{partial fixed-point logic with counting} ($\PFPC$).  It follows
that both of these are also definable  in the infinitary logic with counting, with a bounded
number of variables (see~\cite[Prop.~8.4.18]{EF99}).  Note that it is
known that $\LFPC$ can express any polynomial-time decidable property
of \emph{ordered} structures and similarly $\PFPC$ can express all
polynomial-space decidable properties of ordered structures.  It is
easy to show that co-spectrality is decidable in polynomial time and
DS is in $\PSpace$.  For the latter, note that DS can easily be
expressed by a $\Pi_2$ formula of second-order logic and therefore the
problem is in the second-level of the polynomial hierarchy.  However,
in the absence of a linear order $\LFPC$ and $\PFPC$ are strictly
weaker than the complexity classes $P$ and $\PSpace$ respectively.
Indeed, there are problems in $P$ that are not even expressible in the
infinitary logic with counting.  Nonetheless, it is in this context
without order that we establishe the definability results below.

We begin with a brief definition of the logics in question, to fix the
notation we use.  For a more detailed definition, we refer the reader
to~\cite{EF99}~\cite{Lib04}. 

$\LFPC$ is an extension of inflationary fixed-point logic with the
ability to express the cardinality of definable sets.  The logic has
two sorts of first-order variables: \emph{element variables}, which
range over elements of the structure on which a formula is interpreted
in the usual way, and \emph{number variables}, which range over some
initial segment of the natural numbers. We usually write element
variables with lower-case Latin letters $x, y, \dots$ and use
lower-case Greek letters $\mu, \eta, \dots$ to denote number
variables.  In addition, we have relational variables, each of which has an arity $m$ and an associated type from $\{\mathrm{elem},\mathrm{num}\}^m$.  $\PFPC$ is similarly obtained by allowing the
\emph{partial fixed point} operator in place of the inflationary
fixed-point operator.

For a fixed signature $\tau$, the atomic formulas of $\LFPC[\tau]$ of $\PFPC[\tau]$ are all formulas of the form $\mu
= \eta$ or $\mu \le \eta$, where $\mu, \eta$ are number variables; $s
= t$ where $s,t$ are element variables or constant symbols from
$\tau$; and $R(t_1, \dots, t_m)$, where  $R$ is a relation symbol
(i.e.\ either a symbol from $\tau$ or a relational variable) of arity
$m$ and each $t_i$ is a term of the appropriate type (either $\elem$ or $\num$, as determined by the type of $R$). The set $\LFPC[\tau]$ of 
\emph{$\LFPC$ formulas} over $\tau$ is built up from the atomic formulas by 
applying an inflationary fixed-point operator $[\ifp_{R,\tup x}\phi](\tup t) $;
forming \emph{counting terms} $\countingTerm{x} \phi$, where $\phi$ is a formula
and $x$ an element variable; forming formulas of the kind $s = t$ and $s \le t$ 
where $s,t$ are number variables or counting terms; as well as the standard 
first-order operations of negation, conjunction, disjunction, universal and 
existential quantification. Collectively, we refer to element variables and 
constant symbols as \emph{element terms}, and to number variables and counting 
terms as \emph{number terms}.  The formulas of  $\PFPC[\tau]$ are
defined analogously, but we replace the fixed-point operator rule by
the partial fixed-point:  $[\pfp_{R,\tup x}\phi](\tup t)$.

For the semantics, number terms take  values in $\{0,\ldots,n\}$,
where $n$ is the size of the structure in which they are interpreted. The semantics of atomic formulas,
fixed-points and first-order operations are defined as usual (c.f.,
e.g., \cite{EF99} for details), with comparison of number terms
$\mu \le \eta$ interpreted by comparing the corresponding integers in
$\{0,\ldots,n\}$. Finally, consider a counting term of the form
$\countingTerm{x}\phi$, where $\phi$ is a formula and $x$ an element
variable. Here the intended semantics is that $\countingTerm{x}\phi$
denotes the number (i.e.\ the element of $\{0,\ldots,n\}$) of elements that
satisfy the formula $\phi$.  

Note that, since an inflationary
fixed-point is easily expressed as a partial fixed-point, every
formula of $\LFPC$ can also be expressed as a formula of $\PFPC$.  In
the construction of formulas of these logics below, we freely use
arithmetic expressions on number variables as the relations defined by
such expressions can easily be defined by formulas of $\LFPC$.

In Section~\ref{wlk} we constructed sentences $\phi^l_k$ of $C^3$
which are satisfied in a graph $G$ if, and only if, the number of
closed walks in $G$ of length $l$ is exactly $k$.  Our first aim is to
construct a single formula of $\LFPC$ that expresses this for all $l$
and $k$.  Ideally, we would have the numbers as parameters to the
formula but it should be noted that, while the length $l$ of walks we
consider is bounded by the number $n$ of vertices of $G$, the number
of closed walks of length $l$ is not bounded by any polynomial in $n$.
Indeed, it can be as large as $n^n$.  Thus, we cannot represent the
value of $k$ by a single number variable, or even a fixed-length tuple
of number variables.  Instead, we represent $k$ as a binary relation
$K$ on the number domain.  The order on the number domain induces a
lexicographical order on pairs of numbers, which is a way of encoding
numbers in the range $0,\ldots, n^2$.  Let us write $[i,j]$ to denote
the number coded by the pair $(i,j)$.  Then, a binary relation $K$ can be
used to represent a number $k$ up to $2^{n^2}$ by its binary encoding.  To
be precise, $K$ contains all pairs $(i,j)$ such that bit position
$[i,j]$ in the binary encoding of $k$ is 1.  It is easy to define
formulas of $\LFPC$ to express arithmetic operations on numbers
represented in this way.  

Thus, we aim to construct a single formula
$\phi(\lambda,\kappa_1,\kappa_2)$ of $\LFPC,$ with three free number
variables such that $G \models \phi[l,i,j]$ if, and only if, the
number of closed walks in $G$ of length $l$ is $k$ and position
$[i,j]$ in the binary expansion of $k$ is 1.  To do this, we first
define a formula $\psi(\lambda,\kappa_1,\kappa_2,x,y)$ with free
number variables $\lambda$, $\kappa_1$ and $\kappa_2$ and free element
variables $x$ and $y$ that, when interpreted in $G$ defines the set of
tuples $(l,i,j,v,u)$ such that if there are exactly $k$ walks of
length $l$ starting at $v$ and ending at $u$, then position $[i,j]$ in
the binary expansion of $k$ is 1.  This can be defined by taking the
inductive definition of $\psi^l_k$ we gave in Section~\ref{wlk} and
making the induction part of the formula.

We set out the definition below.
$$
\begin{array}{rcl@{}l}
\psi(\lambda,\kappa_1,\kappa_2,x,y) &  := &
 \ifp_{W,\lambda,\kappa_1,\kappa_2,x,y}[&   \lambda = 1 \land \kappa_1 = 0 \land \kappa_2 =1 \land E(x,y) \lor \\
                                       & & &  \lambda = \lambda'+1 \land \summ(\lambda',\kappa_1,\kappa_2,x,y) ]
\end{array}
$$ where $W$ is a relation variable of type
$(\num,\num,\num,\elem,\elem)$ and the formula $\summ$ expresses that
there is a 1 in the bit position encoded by $(\kappa_1,\kappa_2)$ in
the binary expansion of $k = \sum_{z : E(x,z)} k_{\lambda',z,y}$,
where $k_{\lambda',z,y}$ denotes the number coded by the binary
relation $\{(i,j) : W(\lambda',i,j,z,y)\}$.  We will not write out the
formula $\summ$ in full.  Rather we note that it is easy to define
inductively the sum of a set of numbers given in binary notation, by
defining a sum and carry bit.  In our case, the set of numbers is
given by a ternary relation of type $(\elem,\num,\num)$ where fixing
the first component to a particular value $z$ yields a binary relation
coding a number.  A similar application of induction to sum a set of
numbers then allows us to define the formula
$\phi(\lambda,\kappa_1,\kappa_2)$ which expresses that the bit
position indexed by $(\kappa_1,\kappa_2)$ is 1 in the binary expansion
of $k = \sum_{x \in V} k_x$ where $k_x$ denotes the number coded by
$\{(i,j) : \psi[\lambda,i,j,x,x]\}$.

To define co-spectrality in $\LFPC$ means that we
can write a formula $\cospec$ in a vocabulary with two binary
relations $E$ and $E'$ such that a structure $(V,E,E')$ satisfies this
formula if, and only if, the graphs $(V,E)$ and $(V,E')$ are
co-spectral.  Such a formula is now easily derived from $\phi$.  Let
$\phi'$ be the formula obtained from $\phi$ by replacing all
occurrences of $E$ by $E'$, then we can define:
$$\cospec := \forall \lambda,\kappa_1,\kappa_2 \; \phi \Leftrightarrow \phi'. $$

Now, in order to give a definition in $\PFPC$ of the class of graphs that are DS, we need two variations of the formula $\cospec$.  First, let  $R$ be a relation symbol of type $(\num,\num)$.  We write $\phi(R)$  for the formula obtained from $\phi$ by replacing the symbol $E$ with the relation variable $R$, and suitably replacing number variables with element variables.  So,  $\phi(R,\lambda,\kappa_1,\kappa_2)$ defines, in the graph defined by the relation $R$ on the number domain, the number of closed walks of length $\lambda$.  We write $\cospec_R$ for the formula 
$$\forall \lambda,\kappa_1,\kappa_2 \; \phi(R) \Leftrightarrow \phi, $$
which is a formula with a free relational variable $R$ which, when interpreted in a graph $G$ asserts that the graph defined by $R$ is co-spectral with $G$.  Similarly, we define the formula with two free second-order variables $R$ and $R'$
$$\cospec_{R,R'} := \forall \lambda,\kappa_1,\kappa_2  \;\phi(R) \Leftrightarrow \phi(R'). $$
Clearly, this is true of a pair of relations iff the graphs they define are co-spectral.

Furthermore, it is not difficult to define a formula $\isom(R,R')$ of
$\PFPC$ with two free relation symbols of type $(\num,\num)$ that
asserts that the two graphs defined by $R$ and $R'$ are isomorphic.
Indeed, the number domain is ordered and any property in $\PSpace$ over an ordered domain is definable in $\PFPC$, so such a formula must exist.  Given these, the property of a graph being DS is given by the following formula with second-order quantifiers:
$$\forall R (\cospec_{R} \Rightarrow  \forall R' (\cospec_{R,R'} \Rightarrow \isom(R,R') )).  $$
To convert this into a formula of $\PFPC$, we note that second-order quantification over the number domain can be expressed in $\PFPC$.  That is, if we have a formula $\theta(R)$ of $\PFPC$ in which $R$ is a free second-order variable of type $(\num,\num)$, then we can define a $\PFPC$ formula that is equivalent to $\forall  R \, \theta$.   We do this by means of an induction that loops through all binary relations on the number domain in lexicographical order and stops if for one of them $\theta$ does not hold.  

First, define the formula $\lex(\mu,\nu,\mu',\nu')$ to be the following formula which defines the lexicographical ordering of pairs of numbers:
$$ \lex(\mu,\nu,\mu',\nu') := (\mu < \mu') \lor (\mu = \mu' \land \nu < \nu').$$

We use this to define a formula $\next(R,\mu,\nu)$ which, given a binary relation $R$ of type $(\num,\num)$, defines the set of pairs $(\mu,\nu)$ occurring in the relation that is lexicographically immediately after $R$.
$$
\begin{array}{rcl}
\next(R,\mu,\nu) & := & R(\mu,\nu) \land \exists \mu' \nu' (\lex(\mu',\nu',\mu,\nu) \land \neg R(\mu',\nu')) \lor \\
& & \lor \neg R(\mu,\nu) \land \forall \mu' \nu'  (\lex(\mu',\nu',\mu,\nu) \Rightarrow R(\mu',\nu')) .
\end{array}
$$

We now use this to simulate, in $\PFPC$,  second-order quantification over the number domain. Let $\bR$ be a new relation variable of type $(\num,\num,\num)$ and we define the following formula
$$
\begin{array}{r@{}l}
\forall \alpha \forall \beta \pfp_{\bR,\mu,\nu,\kappa} [ & (\forall \mu \nu \bR(\mu,\nu,0) ) \land \theta(\bR) \land \kappa=0 \lor \\
& \lor \neg\theta(\bR) \land \kappa \neq 0 \lor \\
& \lor \theta(\bR) \land \next(\bR,\mu,\nu) \land \kappa = 0 ] (\alpha,\beta,0) .
\end{array}
$$
It can be checked that this formula is equivalent to $\forall R \, \theta$.

\section{Conclusion}

Co-spectrality is an equivalence relation on graphs with many
interesting facets.  While not every graph is determined upto
isomorphism by its spectrum, it is a long-standing conjecture (see~\cite{Van03}), still open, that \emph{almost all} graphs are DS.
That is to say that the proportion of $n$-vertex graphs that are DS
tends to $1$ as $n$ grows.  We have established a number of results
relating graph spectra to definability in logic and it is instructive
to put them in the perspective of this open question.  It is an easy
consequence of the results in~\cite{Kolaitis92} that the proportion of
graphs that are determined up to isomorphism by their $L^k$ theory
tends to $0$.  On the other hand, it is known that almost all graphs
are determined by their $C^2$ theory (see~\cite{HKL97}) and \emph{a fortiori} by their
$C^3$ theory.  We have established that co-spectrality is
incomparable with $L^k$-equivalence for any $k$; is incomparable with
$C^2$ equivalence; and is subsumed by $C^3$ equivalence.  Thus, our
results are compatible with either answer to the open question of
whether almost all graphs are DS.  It would be interesting to explore
further whether logical definability can cast light on this question.

\bibliographystyle{plain}
\bibliography{main}

\begin{thebibliography}{10}

\bibitem{Alzaga10}
A.~Alzaga, R.~Iglesias, and R.~Pignol.
\newblock Spectra of symmetric powers of graphs and the {W}eisfeiler--{L}ehman
  refinements.
\newblock {\em Journal of Combinatorial Theory, Series B}, 100(6):671--682,
  2010.

\bibitem{Ananchuen06}
W.~Ananchuen and L.~Caccetta.
\newblock Cubic and quadruple paley graphs with the $n$-e. c. property.
\newblock {\em Discrete mathematics}, 306(22):2954--2961, 2006.

\bibitem{Babai80}
L.~Babai, P.~Erd\H{o}s, and S.~M. Selkow.
\newblock Random graph isomorphism.
\newblock {\em SIAM Journal on Computing}, 9(3):628--635, 1980.

\bibitem{Babai79}
L.~Babai and L.~Ku{\v{c}}era.
\newblock Canonical labelling of graphs in linear average time.
\newblock In {\em Foundations of Computer Science, 1979., 20th Annual Symposium
  on}, pages 39--46. IEEE, 1979.

\bibitem{Barghi09}
A.~R. Barghi and I.~Ponomarenko.
\newblock Non-isomorphic graphs with cospectral symmetric powers.
\newblock {\em the electronic journal of combinatorics}, 16(1):R120, 2009.

\bibitem{Blass81}
A.~Blass, G.~Exoo, and F.~Harary.
\newblock Paley graphs satisfy all first-order adjacency axioms.
\newblock {\em Journal of Graph Theory}, 5(4):435--439, 1981.

\bibitem{Brouwer84}
A.~E. Brouwer and J.~H. Van~Lint.
\newblock Strongly regular graphs and partial geometries.
\newblock {\em Enumeration and design (Waterloo, Ont., 1982)}, pages 85--122,
  1984.

\bibitem{Cai92}
J.~Cai, M.~F{\"u}rer, and N.~Immerman.
\newblock An optimal lower bound on the number of variables for graph
  identification.
\newblock {\em Combinatorica}, 12(4):389--410, 1992.

\bibitem{EF99}
H.~B. Ebbinghaus and J.~Flum.
\newblock {\em Finite Model Theory}.
\newblock Springer, 2nd edition, 1999.

\bibitem{Elsawy12}
A.~N. Elsawy.
\newblock Paley graphs and their generalizations.
\newblock {\em arXiv preprint
  \url{http://arxiv.org/pdf/1203.1818v1.pdf}{arXiv:1203.1818}}, 2012.

\bibitem{Fagin76}
R.~Fagin.
\newblock Probabilities on finite models.
\newblock {\em The Journal of Symbolic Logic}, 41(01):50--58, 1976.

\bibitem{Friedland89}
S.~Friedland.
\newblock Coherent algebras and the graph isomorphism problem.
\newblock {\em Discrete Applied Mathematics}, 25(1):73--98, 1989.

\bibitem{GodsilRoyle}
C.~Godsil and G.~F. Royle.
\newblock {\em Algebraic graph theory}, volume 207.
\newblock Springer Science \& Business Media, 2013.

\bibitem{HKL97}
L.~Hella, Ph.~G. Kolaitis, and K.~Luosto.
\newblock How to define a linear order on finite models.
\newblock {\em Ann. Pure Appl. Logic}, 87(3):241--267, 1997.

\bibitem{Higman75}
D.G. Higman.
\newblock Coherent configurations.
\newblock {\em Geometriae Dedicata}, 4(1):1--32, 1975.

\bibitem{Immerman90}
N.~Immerman and E.~Lander.
\newblock {\em Describing graphs: A first-order approach to graph
  canonization}.
\newblock Springer, 1990.

\bibitem{Kolaitis92}
P.~G. Kolaitis and M.~Y. Vardi.
\newblock Infinitary logics and 0--1 laws.
\newblock {\em Information and computation}, 98(2):258--294, 1992.

\bibitem{Kucera87}
L.~Ku\v{c}era.
\newblock Canonical labeling of regular graphs in linear average time.
\newblock In {\em 28th Annual Symposium on Foundations of Computer Science},
  pages 271--279. IEEE, 1987.

\bibitem{WL68}
A.~A. Leman and B.~Weisfeiler.
\newblock A reduction of a graph to a canonical form and an algebra arising
  during this reduction.
\newblock {\em Nauchno-Technicheskaya Informatsiya}, 2(9):12--16, 1968.

\bibitem{Lib04}
L.~Libkin.
\newblock {\em Elements of Finite Model Theory}.
\newblock Springer, 2004.

\bibitem{Schwenk73}
A.~J. Schwenk.
\newblock Almost all trees are cospectral.
\newblock {\em New directions in the theory of graphs}, pages 275--307, 1973.

\bibitem{Van03}
E.~R. Van~Dam and W.~H. Haemers.
\newblock Which graphs are determined by their spectrum?
\newblock {\em Linear Algebra and its applications}, 373:241--272, 2003.

\end{thebibliography}

\end{document}